\begin{document}

\title[Duality in Robust Utility Maximization]{Duality in Robust
  Utility Maximization with Unbounded Claim via a Robust Extension of
  Rockafellar's Theorem }

%    Information for first author
\author[K. Owari]{ Keita Owari}
%    Address of record for the research reported here
\address{Institute of Economic Research,  Hitotsubashi University\\
  2-1 Naka, Kunitachi, Tokyo 186-8603, Japan }
%    Current address
\curraddr{}
\email{\href{mailto:keita.owari@gmail.com}{keita.owari@gmail.com}}

\keywords{Robust utility maximization, Convex duality method, Integral
  functional, Martingale measure}

%\date{First draft: January 24, 2010; this version: Oct, 30, 2010 }

\dedicatory{}

% \keywords{}
\begin{abstract}
  We study the convex duality method for robust utility maximization
  in the presence of a random endowment.  When the underlying price
  process is a locally bounded semimartingale, we show that the
  fundamental duality relation holds true for a wide class of utility
  functions on the whole real line and unbounded random endowment. To
  obtain this duality, we prove a robust version of Rockafellar’s
  theorem on convex integral functionals and apply Fenchel’s general
  duality theorem.
\end{abstract}
\maketitle

\section{Introduction}
\label{intro}

We study the convex duality method for robust utility maximization
with a random endowment. Suppose we are given a semimartingale $S$
describing the evolution of the underlying asset prices, a class
$\Theta$ of admissible integrands (strategies) for $S$, a utility
function $U$, a set $\mathcal{P}$ of probability measures, and a random
variable (endowment) $B$ which is interpreted as the terminal payoff
of a contingent claim. Then the problem is to
\begin{equation}
  \label{eq:ProbUtil1}
  \text{maximize}\quad \inf_{P\in\mathcal{P}}E_P[U(\theta\cdot S_T+B)]\quad\text{over all }\theta\in\Theta.
\end{equation}
The set $\mathcal{P}$ is a mathematical formulation of \emph{model
  uncertainty} (also called \emph{Knightian uncertainty} in
economics), i.e., each element $P\in\mathcal{P}$ is considered as a
\emph{candidate model} of the financial market. We refer to
\cite{follmer_schied_weber09} for background information and recent
results on robust utility maximization problem.

A way of solving (\ref{eq:ProbUtil1}) is the convex duality method
which pass (\ref{eq:ProbUtil1}) to a minimization over a set of local
martingale measures for $S$, through the (formal) duality
\begin{equation}
  \label{eq:duality1}
  \begin{split}
    \sup_{\theta\in\Theta}\inf_{P\in\mathcal{P}}&E_P[U(\theta\cdot S_T+B)]\\
    & =\inf_{\lambda>0}\inf_{Q\in\mathcal{M}}
    \inf_{P\in\mathcal{P}}E_P\left[V\left(\lambda \frac{dQ}{dP}\right)+\lambda
      \frac{dQ}{dP}B\right],
  \end{split}
\end{equation}
where $V$ is the conjugate of the utility function $U$ and
$\mathcal{M}$ is a set of local martingale measures for $S$.  We call
the right hand side the \emph{dual problem}.  When $B\equiv 0$ and
$\mathrm{dom}(U)=\mathbb{R}_+$, this type of duality is proved by
\cite{schied_wu05} under mild assumptions (see also
\citep{follmer_schied_weber09} and \cite{schied07}), and by
\cite{wittmuss08} for bounded endowment $B$. On the other hand, the
case of utility on the whole real line has only a few references:
\cite{follmer_gundel06} for $B\equiv 0$, \cite{owari10:_IJTAF} for the
\emph{exponential utility} with a suitably integrable $B$, and its
generalization by \cite{owari11:_AME} to the case with utility
\emph{bounded from above}.

The central question of this paper is: to what degree of generality
does (\ref{eq:duality1}) hold true?  Under the fundamental assumption
that $S$ is \emph{locally bounded}, we shall prove the duality for a
wide class of utility functions $U$ on the whole real line and
unbounded random endowments $B$. To the best of our knowledge, this is
the most general one among duality results in robust utility
maximization with $U$ being finite on the whole $\mathbb{R}$
(\cite{follmer_gundel06},
\cite{owari10:_IJTAF,owari11:_AME}). Also, this will allow
us to introduce and compute a robust version of utility indifference
prices.

It is worth emphasizing that our contribution is not only the slight
improvement of the result itself, but also to give a simple unified
proof based on a philosophy different from other related works
(\cite{schied_wu05}, \cite{schied07}, \cite{wittmuss08},
\cite{follmer_gundel06},
\cite{owari10:_IJTAF,owari11:_AME}).  A (standard) way of
showing the duality in robust utility maximization is to reduce the
robust problem to a family of \emph{subjective} problems with the help
of a minimax theorem, i.e., to interchange the order of ``$\inf$'' and
``$\sup$'' in the left hand side of (\ref{eq:duality1}) and then apply
the duality for each \emph{fixed} $P\in\mathcal{P}$. This procedure, however,
requires an additional technical assumption that the utility function
is bounded from above.

Instead, we follow a functional analytic approach in the spirit of
\cite{bellini_frittelli02} in the subjective case, which appeals to
Fenchel's general duality theorem with the help of Rockafellar's
classical result on \emph{convex integral functionals}.  To apply this
approach to our robust problem, we need a \emph{Rockafellar-type
  theorem} for a \emph{robust version} of integral functionals.  To
this end, we study the functionals of the type
\begin{align}\label{eq:RobInt1Intro}
  X\mapsto \sup_{P\in\mathcal{P}}E_P[f(\cdot, X)],
\end{align}
where $f: \Omega\times\mathbb{R}\rightarrow\mathbb{R}\cup\{+\infty\}$ is a random
convex function, and describe their conjugate functionals.  This
constitutes the heart of this article, and the duality theorem for the
robust utility maximization then follows with the idea of
\citep{bellini_frittelli02} mentioned above.

The rest of this paper is organized as follows. In
Section~\ref{sec:result}, we state the precise assumptions for the
robust utility maximization problem (\ref{eq:ProbUtil1}) and the
duality theorem (Theorem~\ref{thm:MainFinancial}) as well as some of
its consequences, including a robust version of utility indifference
prices. Then we give in Section~\ref{sec:Outline} the heuristics
behind the proof of the duality theorem and a key lemma
(Lemma~\ref{lem:Key}), which leads us to the study of functionals of
the form (\ref{eq:RobInt1Intro}). This is the subject of
Section~\ref{sec:PointWise}, which constitutes the heart of our
analysis, giving a Rockafellar-type theorem
(Theorem~\ref{thm:PointWise}) for this type of functionals.  The proof
of the duality theorem is given in Section~\ref{sec:ProofMain}.

\section{Main Result}
\label{sec:result}

\subsection{Setup}
\label{sec:SetUp}

Let $(\Omega,\mathcal{F},\mathbb{P})$ be a complete probability space, equipped with
a filtration $\mathbb{F}:=(\mathcal{F}_t)_{t\in[0,T]}$ satisfying the usual
conditions, where $T\in(0,\infty)$ is a fixed time horizon, and
$\mathcal{F}=\mathcal{F}_T$. For any probability $P\ll\mathbb{P}$, we write
$L^p(P):=L^p(\Omega,\mathcal{F},P)$, and set $L^p:=L^p(\mathbb{P})$. Also, the
expectation under $P$ is denoted by $E_P[\cdot]$, and
$E[\cdot]:=E_{\mathbb{P}}[\cdot]$. In other words, any probabilistic
notations without reference to the probability measure are preserved
for $\mathbb{P}$.

The process $S$ modeling the underlying asset prices is supposed to be
a $d$-dimensional càdlàg \emph{locally bounded} semimartingale on
$(\Omega,\mathcal{F},\mathbb{F},\mathbb{P})$. Then it is also a locally
bounded semimartingale under all $P\ll \mathbb{P}$.  A trading strategy and
its gain are modeled respectively by a $d$-dimensional predictable
$(S,\mathbb{P})$-integrable process $\theta=(\theta^1,...,\theta^d)$, and its
stochastic integral $\theta\cdot S=\int_0^\cdot\theta_sdS_s$. The next
class of integrands seems to be a natural choice of \emph{admissible
  strategies}:
\begin{equation}
  \label{eq:admissible}
  \Theta_{bb}:=\{\theta\in L(S):\,\theta_0=0,\, \theta\cdot S\text{ is uniformly bounded from below}\},
\end{equation}
where $L(S):=L(S,\mathbb{P})$ is the set of all predictable
$(S,\mathbb{P})$-integrable processes (see \cite{jacod79,jacod80} for precise
definition). Note that if $\theta\in \Theta_{bb}$, the stochastic
integral $\theta\cdot S$ is well-defined under all $P\ll\mathbb{P}$, and is
then a $P$-supermartingale by Ansel-Stricker's criterion
\citep{ansel_stricker94} (see also
\cite[Ch. 7]{delbaen_schachermayer2006:_mathem_of_arbit}). However,
the class $\Theta_{bb}$ is known to be too small in that it can not
admit an \emph{optimal strategy} even with quite ideal settings (e.g.,
$\mathcal{P}=\{\mathbb{P}\}$, $B\equiv 0$ and $S$ is a geometric Brownian motion) as
long as we work with a utility function on the entire real line (see
e.g. \cite{schachermayer2003:_super_martin_proper_of_optim_portf_proces}). Thus
the class $\Theta_{bb}$ has to be appropriately enlarged if we want to
obtain an optimal strategy, although it is not the purpose of this
article. We will introduce a possible choice of such enlargement,
after defining some more notations.

We fix also a set $\mathcal{P}$ of probabilities $P\ll \mathbb{P}$, which describes
the model uncertainty. Since $P\ll \mathbb{P}$ for all $P\in\mathcal{P}$, we can
embed the set $\mathcal{P}$ into $L^1=L^1(\mathbb{P})$ by the injection $P\mapsto
dP/d\mathbb{P}$. In other words, we identify $\mathcal{P}$ with
$\{dP/d\mathbb{P}\}_{P\in\mathcal{P}}$. Then we assume:
\begin{description}[(A3)]
\item[(A1)] $\mathcal{P}$ is convex and $\sigma(L^1,L^\infty)$-compact
  in $L^1$.
\end{description}

In this paper, we consider a utility function defined on the
\emph{entire real line}, i.e.,
$\mathrm{dom}(U)=\{x\in\mathbb{R}:U(x)\in\mathbb{R}\}=\mathbb{R}$. More specifically, we
assume;
\begin{description}[(A3)]
\item[(A2)] $U:\mathbb{R}\rightarrow\mathbb{R}$ is a strictly concave, increasing,
  and continuously differentiable function satisfying the \emph{Inada
    condition}:
  \begin{equation*}
    % \label{eq:Inada}
    \lim_{x\rightarrow -\infty}U'(x)=+\infty\quad\text{and}\quad \lim_{x\rightarrow+\infty}U'(x)=0.
  \end{equation*}
\end{description}
The conjugate of $U$ is defined by
\begin{align*}
  V(y):=\sup_{x\in\mathbb{R}}(U(x)-xy),\quad y\in \mathbb{R}.
\end{align*}
The assumption (A2) implies that $V$ is a strictly convex
differentiable function with $V(y)=+\infty$ for $y<0$,
$V(0)=\sup_xU(x)$, and
\begin{equation}
  \label{eq:InadaV}
  V'(0):=\lim_{y\searrow 0}V'(y)=-\infty\text{ and }V'(\infty):=\lim_{y\nearrow+\infty}V'(y)=+\infty.
\end{equation}
In particular, $V$ is bounded from below.  Using the function $V$, we
define the $V$-divergence functional by
\begin{equation}
  \label{eq:V-Div1}
  V(\nu|P):=
  \begin{cases}
    E_P[V(d\nu/dP)]\quad&\text{if }Q\ll P,\\
    +\infty\quad&\text{otherwise},
  \end{cases}
\end{equation}
for all positive finite (resp. probability) measures $\nu\ll \mathbb{P}$
(resp. $P\ll \mathbb{P}$). We set also
\begin{align*}
  V(\nu|\mathcal{P}):=\inf_{P\in\mathcal{P}}V(\nu|P),
\end{align*}
and call this map the \emph{robust $V$-divergence}.  Note that
$(\nu,P)\mapsto V(\nu|P)$ is convex and lower semicontinuous (see
\cite[Lemma 2.7]{follmer_gundel06}), hence $\nu\mapsto V(\nu|\mathcal{P})$ is
also convex since $\mathcal{P}$ is convex by (A1).

A probability $Q\ll \mathbb{P}$ on $(\Omega,\mathcal{F})$ is called an absolutely
continuous local martingale measure if $S$ is a $Q$-local
martingale. The set of all absolutely continuous martingale measures
is denoted by $\mathcal{M}_{\mathrm{loc}}$.  For the domain of the dual
problem, we take a subset of $\mathcal{M}_{\mathrm{loc}}$:
\begin{equation}
  \label{eq:DomDual}
  \begin{split}
    \mathcal{M}_V:&=\{Q\in\mathcal{M}_{\mathrm{loc}}:\,V(\lambda Q|\mathcal{P})<\infty\text{
      for some }\lambda>0\}.
  \end{split}
\end{equation}
Note that this set is convex. To see this, suppose $V(\lambda_i
Q_i|\mathcal{P})<\infty$ for $\lambda_i>0$ and $Q_i\in \mathcal{M}_{loc}$ ($i=1,2$),
and let $\alpha\in (0,1)$. Then taking
$\gamma=\lambda_1\lambda_2/(\alpha \lambda_2+(1-\alpha)\lambda_1)$, we
have
\begin{align*}
  V(&\gamma(\alpha Q_1+(1-\alpha)Q_2)|\mathcal{P})\\
  & =V\left(
    \frac{\alpha\lambda_2}{\alpha\lambda_2+(1-\alpha)\lambda_1}\lambda_1Q_1
    +\frac{(1-\alpha)\lambda_1}{\alpha\lambda_2+(1-\alpha)\lambda_1}\lambda_2Q_2\Bigm|\mathcal{P}\right)\\
  &\leq
  \frac{\alpha\lambda_2}{\alpha\lambda_2+(1-\alpha)\lambda_1}V(\lambda_1Q_1|\mathcal{P})
  +\frac{(1-\alpha)\lambda_1}{\alpha \lambda_2+(1-\alpha)\lambda_1}V(\lambda_2Q_2|\mathcal{P})<\infty
\end{align*}
since $\nu\mapsto V(\nu|\mathcal{P})$ is convex.  We assume for $\mathcal{M}_V$:
\begin{description}[(A3)]
\item[(A3)] $\mathcal{M}_V^e:=\{Q\in\mathcal{M}_V:\,Q\sim \mathbb{P}\}\neq\emptyset$.
\end{description}
\begin{rem}\label{rem:NA1}
  This condition implies that there exists a pair $(\bar Q,\bar
  P)\in\mathcal{M}_V\times\mathcal{P}$ as well as $\bar\lambda>0$ such that $\bar
  Q\sim\bar P\sim\mathbb{P}$ and $V(\bar\lambda\bar Q|\bar P)<\infty$. In particular,
  ``$\mathcal{P}$ is equivalent to $\mathbb{P}$'': for any $A\in\mathcal{F}$,
 \begin{align}\label{eq:PequivPB}
   P(A)=0,\,\forall P\in\mathcal{P}\,\Leftrightarrow\,\mathbb{P}(A)=0.
 \end{align}
 This yields a kind of \emph{no-arbitrage} (NA) for $\mathcal{P}$: if
 $\theta\in\Theta_{bb}$ with $P(\theta\cdot S_T\geq 0)=1$ for all
 $P\in\mathcal{P}$, then $P(\theta\cdot S_T>0)=0$ for all $P\in\mathcal{P}$. However,
 each model $P\in\mathcal{P}$ may admit an arbitrage.
\end{rem}

Now we introduce another natural choice of admissible class enlarging
$\Theta_{bb}$:
\begin{equation}
  \label{eq:ThetaV}
  \Theta_{V}:=\{\theta\in L(S):\,\theta_0=0,\,\theta\cdot S\text{ is a $Q$-supermartingale, }\forall Q\in\mathcal{M}_V\}.
\end{equation}
By definition and the comment after (\ref{eq:admissible}), we have
$\Theta_{bb}\subset \Theta_V$. This is a largest natural choice of
universally defined admissible strategies from the arbitrage point of
view, and in the case without model uncertainty (i.e., $\mathcal{P}=\{\mathbb{P}\}$),
this class indeed admits an optimizer.

Finally, for the primal/dual problems to be well-defined, we need some
restriction on the endowment $B$. In this paper, we do not assume $B$
to be bounded, but instead:
\begin{description}[(A3)]
\item[(A4)] For \emph{some} $\varepsilon>0$, the family
  $\{U(-(1+\varepsilon)B^-)dP/d\mathbb{P}\}_{P\in\mathcal{P}}$ is uniformly
  integrable, and for every $P\in\mathcal{P}$, there exists some
  $\varepsilon_P>0$ such that
  \begin{align*}
    E_P[U(-\varepsilon_P B^+)]>-\infty.
  \end{align*}
\end{description}

\begin{rem}[Consequences of (A4) via Young's inequality]
  \label{rem:DirectAss}
  Note that
  \begin{equation}
    \label{eq:Young1}
    U(x)\leq V(y)+xy,\quad\forall x,y\in\mathbb{R}.
  \end{equation}
  This direct consequence of the definition of $V$ is called Young's
  inequality. From this, we have the following:
  \begin{enumerate}
  \item If we take a triplet $(\lambda, Q,P)$ with $\lambda>0$ and
    $V(\lambda Q|P)<\infty$ (then automatically $Q\ll P$), and a
    positive random variable $D$,
    \begin{align*}
      \lambda E_Q[D]\leq V(\lambda Q|P)-E_P[U(-D)].
    \end{align*}
    Taking $D=(1+\varepsilon)B^-$ and $D=\varepsilon_P B^+$, (A4)
    shows that $B^\pm\in L^1(Q)$, hence $B\in L^1(Q)$, for all $Q\in
    \mathcal{P}\cup\mathcal{M}_V$ (note that $V(Q|Q)=V(1)<\infty$ for the case of
    $\mathcal{P}$).
  \item In particular, $V(\lambda Q|\mathcal{P})+\lambda E_Q[B]$ is
    well-defined with values in $\mathbb{R}\cup\{+\infty\}$ for all $Q\in
    \mathcal{M}_V$ and $\lambda >0$ since $V$ is bounded from
    below. Therefore, the dual problem below of (\ref{eq:ProbUtil1})
    is well-defined:
    \begin{equation}
      \label{eq:DualProb1}
      \text{minimize}\quad V(\lambda Q|\mathcal{P})+\lambda E_Q[B]\,\text{over all }\lambda >0,\,Q\in\mathcal{M}_V,
    \end{equation}
    and $\inf_{\lambda>0, Q\in\mathcal{M}_V}(V(\lambda Q|\mathcal{P})+\lambda
    E_Q[B])<\infty$ by (A3).
  \item Whenever $X\in L^1(Q)$ for all $Q\in \mathcal{M}_V$, we have, for all
    $Q\in\mathcal{M}_V$,
    \begin{equation}
      \label{eq:IneqConseqYoung2}
      \begin{split}
        \inf_{P\in\mathcal{P}}E_P&[U(X+B)]\leq E_{P_Q}[U(X+B)]\\
        &\leq V(\lambda_Q Q|P_Q)+\lambda_Q(E_Q[X]+E_Q[B])<\infty,
      \end{split}
    \end{equation}
    where $P_Q\in\mathcal{P}$ and $\lambda_Q>0$ are chosen so that
    $V(\lambda_QQ|P_Q)<\infty$ (such a choice is possible by the
    definition of $\mathcal{M}_V$). This implies that the map $X\mapsto
    \inf_{P\in\mathcal{P}}E_P[U(X+B)]$ is well-defined at least on $L^\infty$
    and $\mathcal{K}:=\{\theta\cdot S_T:\, \theta\in\Theta_{bb}\}$ as a proper
    concave function. Also, taking $X=0$ in
    (\ref{eq:IneqConseqYoung2}), (A4) and the monotonicity of $U$
    imply that $\inf_{\lambda>0,Q\in\mathcal{M}_V}(V(\lambda Q|\mathcal{P})+\lambda
    E_Q[B])>-\infty$.
  \end{enumerate}

\end{rem}

\subsection{Duality Theorem and Robust Utility Indifference Valuation}
\label{sec:MainFin}

We are now in the position to state the duality theorems in its
rigorous form. We start with a basic result, then deduce some of its
consequences. In the rest of this section, (A1) -- (A4) are always in
force as the standing assumptions, and we do not cite them in each
statements. The proofs will be given in Section~\ref{sec:ProofMain}.
\begin{thm}
  \label{thm:MainFinancial}
  For any $\Theta\subset L(S)$ sandwiched by $\Theta_{bb}$ and
  $\Theta_V$, i.e., $\Theta_{bb}\subset \Theta\subset\Theta_V$,  the
  duality holds true:
  \begin{equation}
    \label{eq:ThmDuality}
    \sup_{\theta\in\Theta}\inf_{P\in\mathcal{P}}E_P[U(\theta\cdot S_T+B)]
    =\inf_{\lambda>0}\inf_{Q\in\mathcal{M}_V}\left(V(\lambda Q|\mathcal{P})+\lambda E_Q[B]\right).
  \end{equation}
  Furthermore, the infimum in the right hand side is attained by some
  $(\hat\lambda,\widehat{Q},\widehat{P})\in(0,\infty)\times\mathcal{M}_V\times\mathcal{P}$, i.e., the
  right hand side is written as $V(\hat\lambda \widehat{Q}|\widehat{P})+\hat\lambda
  E_{\widehat{Q}}[B]$.
\end{thm}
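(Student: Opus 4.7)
My plan is to follow the Bellini--Frittelli style functional-analytic route advertised in the introduction. Define the concave primal functional on $L^\infty$ by $F(X):=\inf_{P\in\mathcal{P}}E_P[U(X+B)]$, which is proper and real-valued by Remark~\ref{rem:DirectAss}(3) and the monotonicity of $U$. Define further the convex cone of bounded superhedgeable claims
\begin{equation*}
\mathcal{C}:=\{X\in L^\infty:\exists\,\theta\in\Theta_{bb},\,X\leq\theta\cdot S_T\text{ a.s.}\},
\end{equation*}
and note that the monotonicity of $F$ gives $\sup_{\theta\in\Theta_{bb}}F(\theta\cdot S_T)=\sup_{X\in\mathcal{C}}F(X)$. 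I would then rewrite this supremum as $\sup_{X\in L^\infty}(F(X)-\chi_\mathcal{C}(X))$ and apply Fenchel's general duality theorem on the dual pair $(L^\infty,L^1)$ to obtain
\begin{equation*}
\sup_{X\in\mathcal{C}}F(X)=\min_{Y\in L^1}\bigl(\sigma_\mathcal{C}(Y)-F^*(Y)\bigr),
\end{equation*}
where $\sigma_\mathcal{C}(Y):=\sup_{X\in\mathcal{C}}E[XY]$ and $F^*(Y):=\inf_{X\in L^\infty}(E[XY]-F(X))$. To legitimize this step I would verify $\sigma(L^\infty,L^1)$-upper semicontinuity of $F$ using (A1), Fatou's lemma, and the uniform integrability furnished by (A4).

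The second step is to identify the two conjugates. Because $S$ is locally bounded and the no-arbitrage consequence of (A3) pointed out in Remark~\ref{rem:NA1} holds, the classical Delbaen--Schachermayer bipolar theorem for the hedging cone yields
\begin{equation*}
\sigma_\mathcal{C}(Y)=\begin{cases}0&\text{if }Y=\lambda\,dQ/d\mathbb{P}\text{ with }\lambda\geq0\text{ and }Q\in\mathcal{M}_{\mathrm{loc}},\\+\infty&\text{otherwise.}\end{cases}
\end{equation*}
For $F^*$, I would invoke the robust Rockafellar theorem (Theorem~\ref{thm:PointWise}) with the random convex integrand $f(\omega,x):=-U(x+B(\omega))$, whose pointwise conjugate in $x$ equals $V(y)-yB(\omega)$; the integrability premises of that theorem are exactly what (A4) is tailored to supply. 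This should yield, for $Y=\lambda\,dQ/d\mathbb{P}$ with $\lambda>0$,
\begin{equation*}
-F^*(\lambda\,dQ/d\mathbb{P})=\inf_{P\in\mathcal{P}}E_P[V(\lambda\,dQ/dP)]+\lambda E_Q[B]=V(\lambda Q|\mathcal{P})+\lambda E_Q[B],
\end{equation*}
and the finiteness of this expression forces $Q\in\mathcal{M}_V$ at any minimizer.

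Combining these ingredients through Fenchel's theorem delivers
\begin{equation*}
\sup_{X\in\mathcal{C}}F(X)=\min_{\lambda>0,\,Q\in\mathcal{M}_V}\bigl(V(\lambda Q|\mathcal{P})+\lambda E_Q[B]\bigr),
\end{equation*}
with the minimum attained at some $(\hat\lambda,\hat Q)$. The existence of $\hat P\in\mathcal{P}$ realizing $V(\hat\lambda\hat Q|\hat P)=V(\hat\lambda\hat Q|\mathcal{P})$ then follows from the $\sigma(L^1,L^\infty)$-compactness of $\mathcal{P}$ in (A1) and the joint lower semicontinuity of $(\nu,P)\mapsto V(\nu|P)$ recorded after (\ref{eq:V-Div1}). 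Finally, to handle an arbitrary $\Theta$ sandwiched between $\Theta_{bb}$ and $\Theta_V$, the only remaining direction is the easy one: for any $\theta\in\Theta_V$ and any $(\lambda,Q,P)$ with $\lambda>0$, $Q\in\mathcal{M}_V$, $P\in\mathcal{P}$, Young's inequality together with $E_Q[\theta\cdot S_T]\leq0$ (supermartingale property under $Q$) yields $\inf_{P\in\mathcal{P}}E_P[U(\theta\cdot S_T+B)]\leq V(\lambda Q|P)+\lambda E_Q[B]$, so the supremum over any such $\Theta$ is squeezed between the two now-equal quantities.

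The hardest technical step will be setting up Fenchel's duality rigorously: verifying $\sigma(L^\infty,L^1)$-upper semicontinuity of $F$ on a non-metrizable space, and checking the qualification conditions of Fenchel's theorem in the presence of an unbounded endowment $B$. I expect this is precisely the role of Lemma~\ref{lem:Key}, which should funnel the analytic delicacy into the verifiable hypothesis of the robust Rockafellar result.
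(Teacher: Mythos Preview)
Your overall architecture---Fenchel duality applied to the pair $(F,\chi_{\mathcal{C}})$, identification of the two conjugates via the bipolar of $\mathcal{C}$ and the robust Rockafellar theorem, then the Young-inequality sandwich for general $\Theta$---is exactly the paper's route. The gap is in how you set up the Fenchel step.

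You propose to work on the dual pair $(L^\infty,L^1)$ and justify this by establishing $\sigma(L^\infty,L^1)$-upper semicontinuity of $F$. Two problems: first, Fatou's lemma does not apply to weak$^*$ convergence in $L^\infty$ (you face nets, not sequences, and no a.s.\ convergence), so the verification you sketch does not go through. Second, and more seriously, upper semicontinuity alone does \emph{not} deliver the attainment (``$\min$'') in Fenchel's theorem; Rockafellar's version requires \emph{continuity} of one of the two functions at a common domain point, and in the $\sigma(L^\infty,L^1)$ topology neither $F$ nor $\chi_{\mathcal{C}}$ is continuous anywhere. So even granting the semicontinuity you would only get the duality equality, not the existence of $(\hat\lambda,\widehat{Q})$.

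The paper avoids this by working on $(L^\infty,\|\cdot\|_\infty)$, whose dual is $ba$, not $L^1$. Lemma~\ref{lem:Key} establishes \emph{norm} continuity of $u_{B,\mathcal{P}}$ on all of $L^\infty$ (via Corollary~\ref{cor:RandPlusCont}, using the uniform integrability in (A4)), and this is what licenses Fenchel with attainment---but the minimizer a priori lies in $ba$. The second half of Lemma~\ref{lem:Key}, drawing on Corollary~\ref{cor:RepIntFuncRobConti}, shows that $v_{B,\mathcal{P}}(\nu)=+\infty$ whenever $\nu$ has a nonzero purely finitely additive part; this is what forces the $ba$-minimizer down into $ba^\sigma\simeq L^1$. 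So the role of the robust Rockafellar theorem is not merely to compute $F^*$ on $L^1$, but to kill the singular part of the conjugate on all of $ba$. Your last paragraph correctly senses that Lemma~\ref{lem:Key} is the crux, but misidentifies what it does.

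A smaller omission: when you pass from $\min_{Y\in L^1}$ to $\min_{\lambda>0,\,Q\in\mathcal{M}_V}$ you silently discard $Y=0$. If $V(0)=U(\infty)<\infty$ this is not automatic; the paper handles it separately in Lemma~\ref{lem:Variational} using $V'(0+)=-\infty$ from the Inada condition.
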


When $U(\infty):=\sup_xU(x)<\infty$ (as the exponential utility), the
same result is proved in \citep[Theorems 2.5 and 2.7]{owari11:_AME},
and our Theorem~\ref{thm:MainFinancial} slightly generalize that to
cover the case with $U(\infty)=\infty$. Although this improvement
seems rather minor, we give a unified proof based on a different
philosophy, developing a \emph{new methodology} which is the second
and a key contribution of this work. We refer to \citep{owari11:_AME}
for a review of other related results.

The reason for stating the \emph{robust duality} in a ``robust form
against the choice of $\Theta$'' is twofold. The first one concerns
the possibility of obtaining an optimal strategy as mentioned above,
from which point of view, the larger the class, the better. On the
other hand, from the practical point of view, investors living in the
real market can use only quite limited strategies (e.g., simple
strategies with bounded wealth), thus the smaller the $\Theta$, the
better for real investors, provided that the resulting maximal utility
(the left hand side of (\ref{eq:ThmDuality})) is unchanged. In terms
of the first point of view, whether the class $\Theta_V$ can admit an
optimizer for the problem (\ref{eq:ProbUtil1}) with some reasonable
generality is still open.
\begin{rem}
  Whenever $(\hat\lambda, \widehat{Q},\widehat{P})$ is a dual optimizer, i.e., the
  minimizer of the right hand side, \citep[Theorem 2.7
  (a)]{owari11:_AME} shows that $\widehat{Q}\sim\widehat{P}$. Moreover, we can
  construct a \emph{maximal} solution where the term ``maximal'' is
  used to mean the \emph{maximal support} among all solutions to the
  dual problem. However, even such a maximal solution may fail to be
  equivalent to the \emph{reference measure} $\mathbb{P}$. See
  \cite{schied07} for the concept of maximal solution and related
  information.
\end{rem}

We now give a couple of variants of
Theorem~\ref{thm:MainFinancial}. In the right hand side of
(\ref{eq:ThmDuality}), the infimum in $Q$ is taken over $\mathcal{M}_V$ which
consists of \emph{absolutely continuous} martingale measures. Then it
is natural and important to ask if we can replace $\mathcal{M}_V$ in
(\ref{eq:ThmDuality}) by $\mathcal{M}_V^e$, i.e., by elements which are
equivalent to $\mathbb{P}$. This point becomes crucial when we want to
\emph{compute} in some concrete setting the dual value function or the
indifference prices which we shall discuss below. The answer is
positive \emph{if we abandon the attainability}.
\begin{cor}
  \label{cor:Equiv}
  The infimum over $\mathcal{M}_V$ in (\ref{eq:ThmDuality}) can be replaced by
  that over $\mathcal{M}_V^e$, i.e.,
 \begin{equation}
   \label{eq:CorDualEquiv}
   \sup_{\theta\in\Theta}\inf_{P\in\mathcal{P}}E_P[U(\theta\cdot S_T+B)]
   =\inf_{\lambda>0}\inf_{Q\in\mathcal{M}^e_V}\left(V(\lambda Q|\mathcal{P})+\lambda E_Q[B]\right).
  \end{equation}
\end{cor}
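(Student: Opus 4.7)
The plan is to apply Theorem~\ref{thm:MainFinancial} in order to reduce the corollary to the single inequality
\[
\inf_{\mu>0,\,Q'\in\mathcal{M}_V^e}\!\bigl(V(\mu Q'|\mathcal{P})+\mu E_{Q'}[B]\bigr)\;\leq\;\inf_{\lambda>0,\,Q\in\mathcal{M}_V}\!\bigl(V(\lambda Q|\mathcal{P})+\lambda E_Q[B]\bigr),
\]
since the reverse direction is immediate from $\mathcal{M}_V^e\subset\mathcal{M}_V$. It therefore suffices to approximate, in dual objective value, each admissible $(\lambda,Q)\in(0,\infty)\times\mathcal{M}_V$ with $V(\lambda Q|\mathcal{P})<\infty$ by elements of $(0,\infty)\times\mathcal{M}_V^e$.

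The natural device is to convex-combine $Q$ with an equivalent reference measure. By (A3) fix $\bar Q\in\mathcal{M}_V^e$ and $\bar\lambda>0$ with $V(\bar\lambda\bar Q|\mathcal{P})<\infty$, and for $\alpha\in(0,1)$ set
\[
Q_\alpha:=\alpha\bar Q+(1-\alpha)Q,\qquad\gamma_\alpha:=\frac{\lambda\bar\lambda}{\alpha\lambda+(1-\alpha)\bar\lambda},\qquad\mu_\alpha:=\frac{\alpha\lambda}{\alpha\lambda+(1-\alpha)\bar\lambda}.
\]
A short algebraic identity, mirroring the convexity check for $\mathcal{M}_V$ displayed just below \eqref{eq:DomDual}, yields $\gamma_\alpha Q_\alpha=\mu_\alpha(\bar\lambda\bar Q)+(1-\mu_\alpha)(\lambda Q)$, so convexity of $\nu\mapsto V(\nu|\mathcal{P})$ gives
\[
V(\gamma_\alpha Q_\alpha|\mathcal{P})\leq\mu_\alpha V(\bar\lambda\bar Q|\mathcal{P})+(1-\mu_\alpha)V(\lambda Q|\mathcal{P})<\infty.
\]
Since $\bar Q\sim\mathbb{P}$ and $\alpha>0$ force $Q_\alpha\sim\mathbb{P}$, and since the local boundedness of $S$ provides a common localizing sequence under which $Q_\alpha$ is a local martingale measure, I get $(\gamma_\alpha,Q_\alpha)\in(0,\infty)\times\mathcal{M}_V^e$.

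Passing to the limit $\alpha\to 0^+$, $\mu_\alpha\to 0$ and $\gamma_\alpha\to\lambda$, while $B\in L^1(Q)\cap L^1(\bar Q)$ by Remark~\ref{rem:DirectAss}(1) gives $\gamma_\alpha E_{Q_\alpha}[B]=\gamma_\alpha\bigl(\alpha E_{\bar Q}[B]+(1-\alpha)E_Q[B]\bigr)\to\lambda E_Q[B]$. Combining,
\[
\limsup_{\alpha\to 0^+}\!\bigl(V(\gamma_\alpha Q_\alpha|\mathcal{P})+\gamma_\alpha E_{Q_\alpha}[B]\bigr)\leq V(\lambda Q|\mathcal{P})+\lambda E_Q[B],
\]
and infimizing over $(\lambda,Q)$ completes the argument. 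The only step requiring care is the normalization $\gamma_\alpha$: it is chosen precisely so that $\gamma_\alpha Q_\alpha$ becomes a genuine convex combination of $\bar\lambda\bar Q$ and $\lambda Q$ rather than a mere rescaling of $Q_\alpha$, which is what lets the convexity inequality for $V(\cdot|\mathcal{P})$ be applied directly. All remaining limits are controlled by the $L^1$-integrability of $B$ under $\mathcal{M}_V$-measures supplied by Remark~\ref{rem:DirectAss}(1) and the finite-divergence reference point guaranteed by (A3).
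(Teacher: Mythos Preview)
Your proof is correct and shares the same core idea as the paper's: mix the given martingale measure with an equivalent reference element of $\mathcal{M}_V^e$ and let convexity control the dual objective as the mixing parameter tends to zero. The execution differs slightly. The paper starts from the dual \emph{minimizer} $(\hat\lambda,\widehat{Q},\widehat{P})$ supplied by Theorem~\ref{thm:MainFinancial}, interpolates simultaneously in $(\nu,P)$ via $\nu_\alpha=\alpha\bar\lambda\bar Q+(1-\alpha)\hat\lambda\widehat{Q}$ and $P_\alpha=\alpha\bar P+(1-\alpha)\widehat{P}$, and invokes the \emph{joint} convexity of $(\nu,P)\mapsto V(\nu|P)$ to see that $\alpha\mapsto V(\nu_\alpha|P_\alpha)+\nu_\alpha(B)$ is finite convex on $[0,1]$, hence upper semicontinuous at $0$. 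You instead interpolate only in $Q$, choose the normalizing factor $\gamma_\alpha$ so that $\gamma_\alpha Q_\alpha$ is a genuine convex combination of $\bar\lambda\bar Q$ and $\lambda Q$, and appeal only to the convexity of $\nu\mapsto V(\nu|\mathcal{P})$ already recorded after \eqref{eq:V-Div1}. Your variant is marginally more economical: it approximates every $(\lambda,Q)$ rather than just the optimizer, so it does not depend on the attainment part of Theorem~\ref{thm:MainFinancial}, and it avoids tracking a specific $P\in\mathcal{P}$. Both routes are equally valid and of comparable length.
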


We proceed to the second variant. The reader might realize that our
formulation of the (robust) utility maximization is less general than
the following form which seems more common in literature:
\begin{equation}
  \label{eq:RobUtilInitial}
  \text{maximize}\quad \inf_{P\in\mathcal{P}}E_P[U(x+\theta\cdot S_T+B)]\quad\text{over all }\theta\in\Theta,
\end{equation}
where $x\in\mathbb{R}$ is the \emph{initial capital}. If we consider a
utility on the positive half line with $B\equiv0$, the initial capital
is necessary since otherwise the problem is trivial. In our case,
however, we can embed the problem (\ref{eq:RobUtilInitial}) into
(\ref{eq:ProbUtil1}) by replacing $B$ by $x+B$. This is indeed
possible since the assumption (A4) is stable under translation by
constants: if $B$ satisfies (A4), then so does $x+B$ for any
$x\in\mathbb{R}$. We thus obtain:

\begin{cor}
  \label{cor:Initial}
  For any $\Theta$ with $\Theta_{bb}\subset \Theta\subset\Theta_V$ and
  $x\in \mathbb{R}$,
  \begin{equation}
    \label{eq:DualityX}
    \begin{split}
      \sup_{\theta\in\Theta}\inf_{P\in\mathcal{P}}&E_P[U(x+\theta\cdot S_T+B)]\\
      & =\inf_{\lambda>0}\left\{\lambda x+\inf_{Q\in\mathcal{M}^e_V}\left(V(\lambda
        Q|\mathcal{P})+\lambda E_Q[B]\right)\right\}.
    \end{split}
  \end{equation}

\end{cor}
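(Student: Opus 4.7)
The plan is to reduce Corollary~\ref{cor:Initial} to the already-established Corollary~\ref{cor:Equiv} through the substitution $\tilde B := x + B$: once (A4) is verified for $\tilde B$, applying (\ref{eq:CorDualEquiv}) with $\tilde B$ in place of $B$ and using $E_Q[\tilde B] = x + E_Q[B]$ for probability $Q$ yields (\ref{eq:DualityX}) after pulling $\lambda x$ out of the inner infimum. The hypotheses (A1)--(A3) are untouched by a translation of the endowment, so the only point requiring argument is the stability of (A4) under the addition of a deterministic constant.

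The one piece of real content is therefore to check that $\tilde B = x + B$ satisfies (A4). Both clauses are handled by the pointwise bounds $(x+B)^{\pm} \le |x| + B^{\pm}$ combined with a convex-combination argument using concavity of $U$. For the first clause, I fix $\varepsilon' \in (0, \varepsilon)$ with $\varepsilon$ from (A4) and set $\alpha := (1+\varepsilon')/(1+\varepsilon) \in (0,1)$. Then
\begin{align*}
-(1+\varepsilon')(x+B)^- &\ge -(1+\varepsilon')|x| - \alpha(1+\varepsilon)B^-\\
 &= \alpha\bigl(-(1+\varepsilon)B^-\bigr) + (1-\alpha)z,
\end{align*}
with $z := -(1+\varepsilon')|x|/(1-\alpha)$ a constant. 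Monotonicity and concavity of $U$ give $U(-(1+\varepsilon')(x+B)^-) \ge \alpha U(-(1+\varepsilon)B^-) + (1-\alpha)U(z)$, together with the trivial upper bound $U(-(1+\varepsilon')(x+B)^-) \le U(0)$. Hence $|U(-(1+\varepsilon')(x+B)^-)|\,dP/d\mathbb{P}$ is dominated by a linear combination of $dP/d\mathbb{P}$ (uniformly integrable by the $\sigma(L^1,L^\infty)$-compactness of $\mathcal{P}$ in (A1)) and $U(-(1+\varepsilon)B^-)^-\,dP/d\mathbb{P}$ (uniformly integrable by (A4) for $B$), so the translated family is uniformly integrable as well. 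The second clause, the existence of $\varepsilon'_P > 0$ with $E_P[U(-\varepsilon'_P (x+B)^+)] > -\infty$, is obtained by the same interpolation: pick $\varepsilon'_P$ so small that $\varepsilon'_P / \beta = \varepsilon_P$ for some $\beta \in (0,1)$ and use concavity to bound $U(-\varepsilon'_P(x+B)^+)$ below by $\beta U(-\varepsilon_P B^+) + (1-\beta)U(\mathrm{const})$.

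With (A4) verified for $\tilde B$, Corollary~\ref{cor:Equiv} applies and gives
$$\sup_{\theta\in\Theta}\inf_{P\in\mathcal{P}}E_P[U(\theta\cdot S_T + \tilde B)] = \inf_{\lambda>0}\inf_{Q\in\mathcal{M}_V^e}\bigl(V(\lambda Q|\mathcal{P}) + \lambda E_Q[\tilde B]\bigr).$$
Since $Q \in \mathcal{M}_V^e$ is a probability measure, $E_Q[\tilde B] = x + E_Q[B]$, so $\lambda E_Q[\tilde B] = \lambda x + \lambda E_Q[B]$; the constant $\lambda x$ factors out of the inner infimum in $Q$, yielding exactly (\ref{eq:DualityX}). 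No serious obstacle arises: the whole content of the corollary is an application of Corollary~\ref{cor:Equiv} after a change of variable, the only subtlety being the translation-stability of (A4) treated above.
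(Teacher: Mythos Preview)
Your proposal is correct and follows precisely the paper's own approach: the paper states just before Corollary~\ref{cor:Initial} that ``the assumption (A4) is stable under translation by constants: if $B$ satisfies (A4), then so does $x+B$ for any $x\in\mathbb{R}$,'' and deduces the corollary by applying the preceding duality with $x+B$ in place of $B$. You have simply supplied the details of that translation-stability claim, which the paper asserts without proof.
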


A mathematically straightforward but practically important application
of our duality result is the indifference prices of $B$ based on a
\emph{robust preference}. We define \emph{buyer's robust utility
  indifference price} of $B$ by
\begin{equation*}
  p_b(B):=\sup\left\{p: \begin{split} 
      \textstyle{\sup_{\theta\in\Theta_{bb}}\inf_{P\in\mathcal{P}}}&E_P[U(\theta\cdot S_T-p+B)]\\
      &\geq
      \textstyle{\sup_{\theta\in\Theta_{bb}}\inf_{P\in\mathcal{P}}}E_P[U(\theta\cdot S_T)]
    \end{split}
  \right\}.
\end{equation*}
The interpretation of this ``price'' is same as in the subjective case
(i.e., $\mathcal{P}$ is a singleton, see
e.g. \cite{rouge_elkaroui00,mania_schweizer05,biagini_frittelli_Grasselli08}):
the quantity
$\sup_{\theta\in\Theta_{bb}}\inf_{P\in\mathcal{P}}E_P[U(\theta\cdot S_T-p+B)]$
represents the maximal admissible robust utility \emph{with buying the
  claim $B$ at the price $p$}, while
$\sup_{\theta\in\Theta_{bb}}\inf_{P\in\mathcal{P}}E_P[U(\theta\cdot S_T)]$ is
that \emph{without buying the claim}; thus $p_b(B)$ is understood the
\emph{maximal acceptable price} of $B$ for the buyer whose preference
is determined by the robust utility functional $X\mapsto
\inf_{P\in\mathcal{P}}E_P[U(X)]$. Seller's price is defined in a symmetric
way.  Given Corollary~\ref{cor:Initial}, the following expression is
immediate.
\begin{cor}
  \label{cor:IndiffPrice}
  Buyer's robust utility indifference price $p_b(B)$ of $B$ is
  expressed as:
  \begin{equation}
    \label{eq:IndiffBuy}
    p_b(B)=\inf_{Q\in\mathcal{M}^e_V}(E_Q[B]+\gamma(Q)),
  \end{equation}
  where
  \begin{equation}
    \label{eq:RiskMeasPenal}
    \gamma(Q):=\inf_{\lambda>0}\frac{1}{\lambda}\Bigl(V(\lambda Q|\mathcal{P})
      -\inf_{\lambda'>0}\inf_{Q'\in\mathcal{M}^e_V}V(\lambda' Q'|\mathcal{P})\Bigr).
  \end{equation}
\end{cor}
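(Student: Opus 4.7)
The plan is to reduce the statement to elementary algebra by applying Corollary~\ref{cor:Initial} to both quantities in the definition of $p_b(B)$. First I would set $C:=\inf_{\lambda'>0}\inf_{Q'\in\mathcal{M}^e_V} V(\lambda'Q'|\mathcal{P})$ and apply Corollary~\ref{cor:Initial} with $x=0$ and $B\equiv 0$ (which trivially satisfies (A4)) to obtain
\[
\sup_{\theta\in\Theta_{bb}}\inf_{P\in\mathcal{P}} E_P[U(\theta\cdot S_T)] = C.
\]
Since (A4) is preserved under translation by constants, the same corollary applied with $x=-p$ and endowment $B$ yields
\[
\sup_{\theta\in\Theta_{bb}}\inf_{P\in\mathcal{P}} E_P[U(\theta\cdot S_T - p + B)] = \inf_{\lambda>0}\inf_{Q\in\mathcal{M}^e_V}\bigl\{-\lambda p + V(\lambda Q|\mathcal{P}) + \lambda E_Q[B]\bigr\}.
\]
By (A3) and Remark~\ref{rem:DirectAss}(3) both sides are real, so the inequality defining $p_b(B)$ is a genuine comparison between real numbers.

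Next I would observe that this inequality is equivalent to
\[
-\lambda p + V(\lambda Q|\mathcal{P}) + \lambda E_Q[B] \geq C \quad \text{for every } (\lambda,Q)\in(0,\infty)\times\mathcal{M}^e_V,
\]
since an infimum of real terms is $\geq C$ iff each term is. Dividing by $\lambda>0$ and rearranging, this reads $p \leq E_Q[B] + \frac{1}{\lambda}(V(\lambda Q|\mathcal{P}) - C)$ uniformly in $(\lambda,Q)$. Taking the infimum over $\lambda$ for each fixed $Q$ produces $p \leq E_Q[B] + \gamma(Q)$, and then the infimum over $Q$ gives $p \leq \inf_{Q\in\mathcal{M}^e_V}(E_Q[B] + \gamma(Q))$. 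The feasible set of $p$ is thus the closed half-line with this right endpoint, and its supremum equals the claimed value.

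The main (and essentially only) point requiring care is verifying that Corollary~\ref{cor:Initial} applies to both problems; once this is granted, the proof reduces to the manipulations above. A minor check is that $\gamma(Q)\in[0,\infty)$ is well defined for every $Q\in\mathcal{M}^e_V$: nonnegativity is by the definition of $C$ as an infimum, while the infimum in $\lambda$ is a real number because $V(\lambda Q|\mathcal{P})<\infty$ for at least one $\lambda>0$ by the very definition of $\mathcal{M}_V$. No genuine analytic obstacle arises, as the hard work has already been absorbed into Theorem~\ref{thm:MainFinancial} and Corollary~\ref{cor:Initial}.
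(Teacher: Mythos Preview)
Your proof is correct and follows essentially the same route as the paper's: apply Corollary~\ref{cor:Initial} to both sides of the defining inequality, then unwind the resulting dual inequality by swapping the order of the infima in $(\lambda,Q)$. Your version is in fact slightly more explicit about the well-definedness of $\gamma(Q)$ and the realness of the quantities being compared, but the argument is the same.
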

\begin{proof}[Proof of Corollary~\ref{cor:IndiffPrice}]
  By Corollary~\ref{cor:Initial}, we have
  \begin{align*}
    \sup_{\theta\in\Theta_{bb}}\inf_{P\in\mathcal{P}}E_P[U(\theta\cdot
    S_T-p+B)]%
    &=\inf_{\lambda>0}\inf_{Q\in\mathcal{M}^e_V}(V(\lambda Q|\mathcal{P})-\lambda p+\lambda E_Q[B]),\\
    \sup_{\theta\in\Theta_{bb}}\inf_{P\in\mathcal{P}}E_P[U(\theta\cdot
    S_T)]&=\inf_{\lambda>0}\inf_{Q\in\mathcal{M}^e_V} V(\lambda Q|\mathcal{P})=:v_0.
  \end{align*}
  Hence,
  \begin{align*}
    &\sup_{\theta\in\Theta_{bb}}\inf_{P\in\mathcal{P}}E_P[U(\theta\cdot
    S_T-p+B)]\geq \sup_{\theta\in\Theta_{bb}}\inf_{P\in\mathcal{P}}E_P[U(\theta\cdot S_T)]\\
    \Leftrightarrow&
    \inf_{\lambda>0}\inf_{Q\in\mathcal{M}^e_V}(V(\lambda Q|\mathcal{P})-\lambda p+\lambda E_Q[B])\geq v_0\\
    \Leftrightarrow&\inf_{Q\in\mathcal{M}^e_V}(V(\lambda Q|\mathcal{P})+\lambda E_Q[B])-\lambda p\geq v_0,\quad\forall \lambda>0\\
    \Leftrightarrow& \inf_{Q\in\mathcal{M}^e_V}(E_Q[B]+\gamma(Q))\\
&\qquad    =\inf_{Q\in\mathcal{M}^e_V}\left(E_Q[B]+\frac{1}{\lambda}\left(V(\lambda
        Q|\mathcal{P})-v_0\right)\right)\geq p,\quad\forall \lambda>0.
  \end{align*}
  Therefore, we have (\ref{eq:IndiffBuy}).
\end{proof}

\subsection{Heuristics and a key Lemma}
\label{sec:Outline}

To motivate ourselves, and to make it clear what the difficulty and
our contribution are, we give here a functional analytic view on the
duality theorem. We begin by recalling a classical theorem in
functional analysis, called Fenchel's duality theorem. Let
$f:E\rightarrow \mathbb{R}\cup\{+\infty\}$
(resp. $g:E\rightarrow\{-\infty\}$) be a proper convex (resp. concave)
function defined on a topological vector space $E$. Then Fenchel's
theorem (Rockafellar's version \citep{rockafellar66:_fenchel}) states
that \emph{if either $f$ or $g$ is continuous at some $x_0\in
  \mathrm{dom}f\cap\mathrm{dom}g$, then}
\begin{equation}
  \label{eq:Fenchel1}
  \sup_{x\in E}(g(x)-f(x))=\min_{y\in E^*}(f^*(x)-g_*(y)),
\end{equation}
where $E^*$ denotes the (topological) dual of $E$, and $f^*$
(resp. $g_*$) is the convex (resp. concave) conjugate of $f$
(resp. $g$) defined by
\begin{align*}
  f^*(x)&:=\sup_{x\in E}(\langle x,y\rangle-f(x)),\quad \forall y\in E^*\\
  g_*(y)&:=\inf_{x\in E}(\langle x,y\rangle -g(x)),\quad\forall y\in
  E^*.
\end{align*}
If we take a convex cone $C\subset E$, and set $f(x)=\delta_C(x):=0$
(resp. $:=+\infty$) if $x\in C$ (resp. $x\not\in C$), which is
obviously convex, then its conjugate is given by
\begin{align*}
  \delta_C^*(y)=\sup_{x\in C}\langle x,y\rangle=\delta_{C^\circ}(y),
\end{align*}
where $C^\circ$ is the \emph{polar cone} defined by
\begin{align*}
  C^\circ:=\{y\in E^*:\, \langle x,y\rangle\leq 0,\, \forall x\in C\}.
\end{align*}
Then if $g$ is continuous at some point $x_0\in C$, we have
\begin{equation}
  \label{eq:FenchelCone}
  \sup_{x\in C}g(x)=\sup_{x\in E}(g(x)-\delta_C(x))=\min_{y\in C^\circ}-g_*(y).
\end{equation}

This is a well-established duality argument in \emph{abstract
  functional analysis}. We now translate our \emph{specific} problem
into this framework. We take $E=L^\infty$, then its dual is $ba$ with
$\langle X,\nu\rangle=\int_\Omega Xd\nu=:\nu(X)$ (see
Definition~\ref{dfn:ba} and Lemma~\ref{lem:ba}). Also, define
\begin{equation}
  \label{eq:SetC}
  \mathcal{C}:=\{X\in L^\infty:\,X\leq \theta\cdot S_T,\,\exists \theta\in\Theta_{bb}\}.
\end{equation}
This is the set of all super-replicable claims with zero initial
costs, and it is well-known (see
e.g. \citep{delbaen_schachermayer2006:_mathem_of_arbit}) in
mathematical finance that $\mathcal{C}$ is a convex cone containing
$L^\infty_-$, and every \emph{$\sigma$-additive} element of
$\mathcal{C}^\circ$ is a positive multiple of some local martingale measure,
that is,
\begin{equation}
  \label{eq:PosMultiple}
  \mathcal{C}^\circ\cap ba^\sigma=\{\lambda Q:\, \lambda\geq 0, \, Q\in\mathcal{M}_\mathrm{loc}\}=:\mathrm{cone}\mathcal{M}_\mathrm{loc}.
\end{equation}
Here $ba^\sigma$ denotes the set of all \emph{$\sigma$-additive}
elements of $ba$ (i.e., finite signed measures $\nu\ll \mathbb{P}$).

For the concave function $g$ and its conjugate, we take
\begin{align}
  \label{eq:UtilFunctionB}
  u_{B,\mathcal{P}}(X)&:=\inf_{P\in\mathcal{P}}E_P[U(X+B)], \quad X\in L^\infty,\\
  \label{eq:ConjVB}
  v_{B,\mathcal{P}}(\nu)&:=\sup_{X\in L^\infty}(u_{B,\mathcal{P}}(X)-\nu(X))=-(u_{B,\mathcal{P}})_*(\nu),\quad\nu\in ba.
\end{align}
Note that $u_{B,\mathcal{P}}$ is well-defined proper concave function on
$L^\infty$ due to Item 3 of Remark~\ref{rem:DirectAss} above. Now if
$u_{B,\mathcal{P}}$ is continuous, the above argument shows:
\begin{equation}
  \label{eq:FenchelOutline2}
  \sup_{X\in\mathcal{C}}u_{B,\mathcal{P}}(X)=\min_{\nu\in \mathcal{C}^\circ}v_{B,\mathcal{P}}(\nu)=\min_{\nu\in \mathcal{C}^\circ\cap\mathrm{dom }\, v_{B,\mathcal{P}}}v_{B,\mathcal{P}}(\nu).
\end{equation}
Thus, if we have the continuity of $u_{B,\mathcal{P}}$ and the complete
description of $v_{B,\mathcal{P}}$, everything else is cleared by the
classical duality arguments via Fenchel's theorem. The key is
therefore:
\begin{lem}[Key Lemma]
  \label{lem:Key}
  Assume (A1) -- (A4). Then $u_{B,\mathcal{P}}$ is finite and norm continuous
  on the whole $L^\infty$, and
  \begin{align}\label{eq:ConjVRepKeyLemma1}
    v_{B,\mathcal{P}}(\nu)=
    \begin{cases}
        V(\nu|\mathcal{P})+\nu(B)&\quad\text{if $\nu\in ba_+^\sigma$ and }V(\nu|\mathcal{P})<\infty,\\
        +\infty&\quad\text{otherwise,}
      \end{cases}
    \end{align}
  where $\nu(B):=\int_\Omega Bd\nu$.
\end{lem}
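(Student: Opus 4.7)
The plan is to realize $-u_{B,\mathcal{P}}$ as a robust integral functional of the form (\ref{eq:RobInt1Intro}) and then invoke the Rockafellar-type Theorem~\ref{thm:PointWise}. Setting $f(\omega,x):=-U(x+B(\omega))$, which is convex in $x$, one has $-u_{B,\mathcal{P}}(X)=\sup_{P\in\mathcal{P}}E_P[f(\cdot,X)]$ and $v_{B,\mathcal{P}}(\nu)=(-u_{B,\mathcal{P}})^*(-\nu)$. The substitution $u=x+B(\omega)$ gives the fiberwise Legendre transform
$$f^*(\omega,y)=\sup_x\bigl(xy+U(x+B(\omega))\bigr)=V(-y)-yB(\omega).$$

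Before Theorem~\ref{thm:PointWise} can be applied, one must verify that $u_{B,\mathcal{P}}$ is finite and norm continuous on $L^\infty$. For finiteness, the upper bound uses the triple $(\bar\lambda,\bar Q,\bar P)$ from Remark~\ref{rem:NA1} together with Young's inequality (\ref{eq:Young1}): $u_{B,\mathcal{P}}(X)\leq E_{\bar P}[U(X+B)]\leq V(\bar\lambda\bar Q|\bar P)+\bar\lambda E_{\bar Q}[X+B]<\infty$, where $B\in L^1(\bar Q)$ by Remark~\ref{rem:DirectAss}(1). The lower bound follows from $X+B\geq -\|X\|_\infty-B^-$ and the concavity of $U$ applied to the decomposition
$$-\|X\|_\infty-B^-=\frac{1}{1+\varepsilon}\bigl(-(1+\varepsilon)B^-\bigr)+\frac{\varepsilon}{1+\varepsilon}\left(-\frac{(1+\varepsilon)\|X\|_\infty}{\varepsilon}\right),$$
so that the uniform integrability in (A4) yields $\sup_{P\in\mathcal{P}}E_P[U(X+B)^-]<\infty$. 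Continuity then follows from standard convex analysis: each $E_P[U(\cdot+B)]$ is norm continuous on $L^\infty$ by dominated convergence (dominators controlled via Young's inequality and Remark~\ref{rem:DirectAss}(1)), so $u_{B,\mathcal{P}}$ is upper semicontinuous as an infimum; being finite, concave, and USC on the Banach space $L^\infty$, it is automatically locally bounded below (Baire applied to $L^\infty=\bigcup_n\{u_{B,\mathcal{P}}\geq -n\}$, each level set closed), hence norm continuous everywhere.

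With this in hand, Theorem~\ref{thm:PointWise} delivers the conjugate formula as a mechanical translation: for $\mu$ in the effective domain,
$$(-u_{B,\mathcal{P}})^*(\mu)=\inf_{P\in\mathcal{P}}E_P[f^*(\cdot,d\mu/dP)],$$
and $+\infty$ elsewhere; substituting $\mu=-\nu$ and the expression for $f^*$ gives
$$v_{B,\mathcal{P}}(\nu)=\inf_{P\in\mathcal{P}}\bigl(E_P[V(d\nu/dP)]+\nu(B)\bigr)=V(\nu|\mathcal{P})+\nu(B),$$
where the splitting of the two terms uses $B\in L^1(P)$ whenever $V(\nu|P)<\infty$ (Remark~\ref{rem:DirectAss}(1)). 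The constraint $V(y)=+\infty$ for $y<0$ automatically confines the finite part of the conjugate to $\nu\in ba_+$, yielding the stated dichotomy.

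The principal difficulty is thus \emph{not} within this lemma but in Theorem~\ref{thm:PointWise} itself: what is genuinely needed is a robust Rockafellar-type identification that correctly pins down the effective domain of the conjugate as $ba_+^\sigma$, in particular excluding the purely finitely additive part of $ba$ (a feature that in the classical single-measure Rockafellar theorem follows from Fatou-type continuity but must be redone in the $\sup_{P\in\mathcal{P}}$ setting). Granted that machinery, the Key Lemma reduces to the finiteness-continuity check and the change of variables in the Legendre transform given above.
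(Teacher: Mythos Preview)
Your overall strategy---casting $-u_{B,\mathcal{P}}$ as a robust integral functional $\mathcal{I}_{f,\mathcal{P}}$ and appealing to the robust Rockafellar machinery---matches the paper's approach (the paper uses the reflected integrand $f(\omega,x)=-U(-x+B(\omega))$ so that $v_{B,\mathcal{P}}=(\mathcal{I}_{f,\mathcal{P}})^*$ without a sign flip, but this is cosmetic). Your route to continuity via upper semicontinuity of an infimum plus a Baire argument is a valid alternative to the paper's.

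There is, however, a genuine gap in how you invoke Theorem~\ref{thm:PointWise}. That theorem does \emph{not} assert that the conjugate equals $\mathcal{J}_{\tilde f,\mathcal{P}}$ on $ba^\sigma$ and $+\infty$ elsewhere; it gives only the sandwich (\ref{eq:EstConj}), whose lower bound carries the term $\sup_{X\in\mathcal{D}}\nu_s(X)$. The two sides collapse, and force $+\infty$ whenever $\nu_s\neq0$, only when $\mathcal{D}=L^\infty$ (Corollary~\ref{cor:RepIntFuncRobConti}); Examples~\ref{ex:1}--\ref{ex:2} show that this is not automatic. The fix is already latent in your own concavity bound: pointwise one has
\[
f(\omega,x)\leq -\frac{\varepsilon}{1+\varepsilon}\,U\!\left(\frac{(1+\varepsilon)x}{\varepsilon}\right)-\frac{1}{1+\varepsilon}\,U\!\bigl(-(1+\varepsilon)B^-(\omega)\bigr)=:g(x)+W(\omega),
\]
and (A4) gives the uniform integrability of $\{W\,dP/d\mathbb{P}\}_{P\in\mathcal{P}}$, which is precisely the hypothesis of Corollary~\ref{cor:RandPlusCont}; hence $\mathcal{D}=L^\infty$. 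You used this bound only to conclude $\sup_P E_P[f(\cdot,X)^+]<\infty$; it is the full uniform-integrability content of (A4), not mere $L^1(\mathcal{P})$-boundedness, that kills the purely finitely additive part. Once $\mathcal{D}=L^\infty$ is observed, Corollary~\ref{cor:RepIntFuncRobConti} delivers both the continuity of $\mathcal{I}_{f,\mathcal{P}}$ and the full conjugate formula in one stroke---this is exactly the paper's route, and it renders your Baire-based continuity argument correct but redundant. (Minor slip: when splitting off $\nu(B)$ from $\inf_P E_P[V(d\nu/dP)+(d\nu/dP)B]$ you need $B\in L^1(\nu)$, not $L^1(P)$; this follows from Remark~\ref{rem:DirectAss}(1) applied with $\nu$ in the role of $\lambda Q$.)
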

As soon as we obtain this key lemma, the above arguments shows the
following \emph{abstract duality}:
\begin{equation}
  \label{eq:AbstractDuality}
  \sup_{X\in\mathcal{C}}u_{B,\mathcal{P}}(X)=\min_{\lambda\geq0, Q\in\mathcal{M}_V}(V(\lambda Q|\mathcal{P})+\lambda E_Q[B]).
\end{equation}
Indeed, the representation (\ref{eq:ConjVRepKeyLemma1}) and
(\ref{eq:PosMultiple}) imply $\mathcal{C}^\circ \cap\mathrm{dom}\,
v_{B,\mathcal{P}}\subset \mathrm{cone}\mathcal{M}_V$, and (\ref{eq:FenchelOutline2})
shows (\ref{eq:AbstractDuality}). Then some slight adjustments will
finish the proof of the duality (\ref{eq:ThmDuality}).

If we look at this lemma from more abstract point of view, we come up
to the convex functionals of the type:
\begin{equation}
  \label{eq:RobIntFunc}
  \mathcal{I}_{f,\mathcal{P}}(X):=\sup_{P\in\mathcal{P}}E_P[f(\cdot, X)],\quad X\in L^\infty,
\end{equation}
where $f:\Omega\times\mathbb{R}\rightarrow \mathbb{R}\cup\{+\infty\}$ is a
\emph{random convex function}. In fact, if we take
$f(\omega,x)=-U(-x+B(\omega))$, then $u_{B,\mathcal{P}}(X)=-\mathcal{I}_{f,\mathcal{P}}(-X)$,
and $v_{B,\mathcal{P}}(\nu)=\mathcal{I}_{f,\mathcal{P}}^*(\nu)$. When $\mathcal{P}$ consists of a
single element (say $\{\mathbb{P}\}$), the corresponding functional $X\mapsto
E[f(\cdot, X)]$ is called the \emph{convex integral functional}, and
its basic properties including the expression of the conjugate
functional are obtained by R. T. Rockafellar in
\citep{rockafellar68:_integ_whichI,rockafellar71,rockafellar76:_integ_funct_normal}.
The core of our analysis is thus to extend these classical results to
the \emph{robust version} of integral functionals of the form
(\ref{eq:RobIntFunc}).

\section{Analysis of Robust Convex Functionals}
\label{sec:PointWise}

We now proceed to the key part of the paper, namely the analysis of
the robust convex functionals formally defined by
(\ref{eq:RobIntFunc}). We start by introducing some terminologies.

\begin{dfn}[Normal Convex Integrands]
  \label{dfn:NormalIntegrands}
  A map $f:\Omega\times \mathbb{R}\mapsto \mathbb{R}\cup\{+\infty\}$ is said to be a
  \emph{normal convex integrand} if:
  \begin{enumerate}
  \item $f$ is $\mathcal{F}\otimes\mathcal{B}(\mathbb{R})$-measurable;
  \item $x\mapsto f(\omega,x)$ is a lower semicontinuous proper convex
    function for a.e. $\omega$.
  \end{enumerate}
\end{dfn}

Several equivalent definitions of normality are found in
\cite{rockafellar_wets98}. An immediate but important consequence of
the normality in this sense is that the map $\omega\mapsto
f(\omega,X(\omega))$ is $\mathcal{F}$-measurable, hence the following
definition makes sense:
\begin{equation}
  \label{eq:IntFuncClassical}
  I_f(X):=E[f(\cdot,X)],\quad X\in L^\infty.
\end{equation}
Note that the normality of $f$ already implies that the
``$\omega$-wise'' conjugate defined below is also normal:
\begin{equation}
  \label{eq:OutlineConjNormal1}
  f^*(\omega,y):=\sup_{x\in\mathbb{R}}(xy-f(\omega,x)),\quad\forall y\in \mathbb{R},
\end{equation}
Therefore, the integral functional $I_{f^*}$ on $L^1$ also makes sense
(provided some integrability condition).

As we are considering the integral functionals on $L^\infty$, we need
the description of the dual of $L^\infty$, that is the space $ba$,
which already appeared in Section~\ref{sec:Outline}. For the article
to be self-contained, we recall the definition and some of basic
properties. For the proofs and more details, we refer the reader to
\cite{dunford58:_linear_operat_I}, \cite{hewitt_stromberg75:_real} and
\cite{rao_rao83:_theor}.
\begin{dfn}\label{dfn:ba}
  $ba:=ba(\Omega,\mathcal{F},\mathbb{P})$ is the set of all bounded finitely
  additive signed measures absolutely continuous w.r.t. $\mathbb{P}$, i.e.,
  $\nu\in ba$ if and only if $\nu$ is a real valued function on
  $\mathcal{F}$ such that (1) $\sup_{A\in\mathcal{F}}|\nu(A)|<\infty$,
  (2) for every $A\in\mathcal{F}$, $\mathbb{P}(A)=0$ implies $\nu(A)=0$, (3)
  if $A,B\in\mathcal{F}$ and $A\cap B=\emptyset$, then $\nu(A\cup
  B)=\nu(A)+\nu(B)$.  Also, $ba_+$ (resp. $ba^\sigma$) denotes the set
  of positive (resp. $\sigma$-additive) elements of $ba$, and set
  $ba_+^\sigma:= ba_+\cap ba^\sigma$.
\end{dfn}

\begin{lem}
  \label{lem:ba}
  The following assertions hold:
  \begin{enumerate}
  \item (Jordan decomposition). Every $\nu\in ba$ admits a unique
    decomposition $\nu=\nu_+-\nu_-$, where $\nu_{\pm}\in ba_+$. Thus,
    $|\nu|=\nu_++\nu_-\in ba_+$ is well-defined.
  \item $ba$ is a Banach space equipped with the total variation norm
    $\|\nu\|=|\nu|(\Omega)$, and $ba\simeq (L^\infty)^*$ with $\langle
    X,\nu\rangle=\int_\Omega Xd\nu=:\nu(X)$. Also, $L^1\simeq ba^\sigma$ is a
    closed subspace of $ba$.
  \item (Yosida-Hewitt decomposition) Every $\nu\in ba$ admits a
    unique decomposition $\nu=\nu_r+\nu_s$, where $\nu_r\in ba^\sigma$
    and $\nu_s$ is purely finitely additive, i.e., for every $\mu\in
    ba^\sigma_+$, $0\leq \mu\leq |\nu_s|$ implies $\mu=0$.
  \item $\nu\in ba$ is purely finitely additive if and only if there
    exists an increasing sequence $(A_n)\subset \mathcal{F}$ such that
    $\mathbb{P}(A_n)\nearrow 1$ and $|\nu|(A_n)=0$ for all $n$.
  \end{enumerate}
\end{lem}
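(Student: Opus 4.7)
The four assertions are all classical facts about the space of bounded finitely additive measures, and my plan is to sketch each one, referring to \cite{dunford58:_linear_operat_I}, \cite{hewitt_stromberg75:_real}, and \cite{rao_rao83:_theor} for the complete details.

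For (1), the Jordan decomposition proceeds as in the countably additive case. Define $\nu_+(A) := \sup\{\nu(B) : B \subset A,\, B \in \mathcal{F}\}$ and $\nu_-(A) := -\inf\{\nu(B) : B \subset A,\, B \in \mathcal{F}\}$; both suprema are finite by part (1) of Definition \ref{dfn:ba}. One then verifies that $\nu_{\pm}$ are finitely additive and positive and that $\nu = \nu_+ - \nu_-$; uniqueness of the minimal decomposition is the usual Hahn-type argument, adapted to the finitely additive setting (no monotone class step is needed).

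For (2), completeness of $ba$ under $\|\nu\| := |\nu|(\Omega)$ follows by defining the limit of a Cauchy sequence pointwise on $\mathcal{F}$ and checking that the result is finitely additive, $\mathbb{P}$-absolutely continuous, and bounded. The identification $ba \simeq (L^\infty)^*$ has two directions. Forward: for $\nu \in ba$, the pairing $\nu(X) := \int X\,d\nu$ is defined on $\mathcal{F}$-simple functions in the obvious way and extends by continuity (using $|\nu(X)| \le \|X\|_\infty \|\nu\|$) to all of $L^\infty$, the $\mathbb{P}$-absolute continuity of $\nu$ ensuring well-definedness on equivalence classes. Reverse: given $\varphi \in (L^\infty)^*$, set $\nu(A) := \varphi(\mathbf{1}_A)$; linearity of $\varphi$ yields finite additivity, continuity yields boundedness, and $\mathbb{P}(A) = 0 \Rightarrow \mathbf{1}_A = 0$ in $L^\infty$ yields absolute continuity. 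The isometry $\|\nu\|_{ba} = \|\varphi\|_{(L^\infty)^*}$ follows by testing $\varphi$ against simple approximations to $\mathbf{1}_{\{d\nu_+ > 0\}} - \mathbf{1}_{\{d\nu_- > 0\}}$. The embedding $L^1 \simeq ba^\sigma$ is the Radon--Nikodym theorem, and closedness of $ba^\sigma$ in $ba$ follows because a total-variation limit of countably additive measures inherits countable additivity (given $A_n \downarrow \emptyset$, use $|\nu(A_n)| \le |\nu - \nu_k|(\Omega) + |\nu_k(A_n)|$ and pick $k$ large before sending $n \to \infty$).

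For (3) and (4), the Yosida--Hewitt decomposition is obtained by a maximality argument: inside the order interval $\{\mu \in ba_+^\sigma : 0 \le \mu \le |\nu|\}$, the supremum $\nu_r^+$ taken in the finitely additive lattice is itself countably additive, while the residual $|\nu| - \nu_r^+$ is purely finitely additive by construction; Jordan-decomposing recovers $\nu_r$ and $\nu_s$. For (4), if $\nu \in ba$ is purely finitely additive, an exhaustion argument (see \cite{rao_rao83:_theor}) produces an increasing sequence $(A_n)$ with $\mathbb{P}(A_n) \nearrow 1$ on which $|\nu|$ vanishes. Conversely, given such $(A_n)$, any countably additive minorant $\mu$ of $|\nu|$ satisfies $\mu(A_n) = 0$, hence $\mu(\Omega) = \lim_n \mu(A_n) = 0$, proving purity. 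The main obstacle is not mathematical depth but bookkeeping — keeping the finitely additive calculus consistent with the $\mathbb{P}$-a.s.\ identifications used throughout the paper — and everything reduces to the standard references once this is set up.
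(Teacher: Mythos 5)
Your proposal is correct and follows essentially the same route as the paper, which states these facts without proof and simply refers the reader to Dunford--Schwartz, Hewitt--Stromberg and Rao--Rao; your sketches are the standard arguments from those references (Jordan decomposition via $\nu_+(A)=\sup_{B\subset A}\nu(B)$, the duality $ba\simeq(L^\infty)^*$ via simple functions, Yosida--Hewitt via the band/maximality argument, and the exhaustion characterization of pure finite additivity). One small caveat: in the isometry step of (2), the sets $\{d\nu_+>0\}$ and $\{d\nu_->0\}$ are not defined, since a merely finitely additive $\nu_\pm$ has no density with respect to $\mathbb{P}$; the standard argument instead tests $\varphi$ against $\sum_i \mathrm{sgn}(\nu(A_i))\mathbf{1}_{A_i}$ over finite measurable partitions $(A_i)$ of $\Omega$, whose supremum recovers $|\nu|(\Omega)$.
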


With these preparations, we now recall the classical Rockafellar
theorem.
\begin{thm}[{\citep[Theorem 1, Corollary 2A]{rockafellar71}}]
  \label{thm:RockafellarClassical}
  Let $f$ and $f^*$ be as above, and assume:
  \begin{align}\label{eq:R1}
    &\text{ there exists some $X\in L^\infty$ such that
      $f(\cdot,X)^+\in L^1$;}\\\label{eq:R2}%
    &\text{ there exists some $Y\in L^1$ such that $f^*(\cdot,Y)^+\in
      L^1$.}
\end{align}
Then $I_{f}$ (resp. $I_{f^*}$) is well-defined as a lower
semicontinuous proper convex functional on $L^\infty$ (resp. $L^1$),
and for any $\nu\in ba$,
  \begin{equation}
    \label{eq:ThmRock1}
    \sup_{X\in L^\infty}(\nu(X)-I_{f}(X))=I_{f^*}(d\nu_r/d\mathbb{P})+\sup_{X\in\mathrm{dom}(I_{f})}\nu_s(X),
  \end{equation}
  where $\nu_r$ (resp. $\nu_s$) is the regular (resp. singular) part
  of $\nu$ in the Yosida-Hewitt decomposition. In particular, if
  $\mathrm{dom}(I_{f})=L^\infty$, then $I_{f}$ is norm continuous on
  the whole $L^\infty$ and
  \begin{equation}
    \label{eq:ThmRock2}
    \sup_{X\in L^\infty}(\nu(X)-I_{f}(X))=
    \begin{cases}
      I_{f^*}(d\nu/d\mathbb{P})&\quad\text{if $\nu$ is $\sigma$-additive,}\\
      +\infty&\quad\text{otherwise.}
    \end{cases}
  \end{equation}
\end{thm}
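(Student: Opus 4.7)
The plan is to establish the theorem in three stages: basic regularity of $I_f$ and $I_{f^*}$; the conjugate identity (\ref{eq:ThmRock1}); and the specialization to (\ref{eq:ThmRock2}).

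First I would record the pointwise Young inequality $f(\omega,x)+f^*(\omega,y)\geq xy$, which together with a fixed $Y_0\in L^1$ from (\ref{eq:R2}) yields
\begin{equation*}
f(\omega,X(\omega))\geq X(\omega)Y_0(\omega)-f^*(\omega,Y_0(\omega)).
\end{equation*}
This affinely bounds $I_f$ below on $L^\infty$, so $I_f$ is well-defined with values in $\mathbb{R}\cup\{+\infty\}$. Properness follows from (\ref{eq:R1}) and the $\mathcal{F}$-measurability of $\omega\mapsto f(\omega,X(\omega))$ guaranteed by normality; convexity comes from pointwise convexity of $f(\omega,\cdot)$; lower semicontinuity (for the norm, equivalently for $\sigma(L^\infty,L^1)$ on bounded sets) follows from Fatou's lemma applied to the nonnegative sequence $f(\cdot,X_n)-X_nY_0+f^*(\cdot,Y_0)$. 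Exchanging the roles of $f$ and $f^*$ (and swapping (\ref{eq:R1}) with (\ref{eq:R2})) handles $I_{f^*}$ on $L^1$.

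For the conjugate identity, the inequality ``$\leq$'' in (\ref{eq:ThmRock1}) is immediate from Young applied against $Y:=d\nu_r/d\mathbb{P}$: for $X\in\mathrm{dom}(I_f)$,
\begin{equation*}
\nu(X)-I_f(X)=\int XY\,d\mathbb{P}+\nu_s(X)-I_f(X)\leq I_{f^*}(Y)+\nu_s(X).
\end{equation*}
The reverse inequality is the heart of the proof and combines two constructions. A measurable selection theorem (see \cite{rockafellar_wets98}) provides, for each $n$, a measurable $X_n$ with $|X_n|\leq n$ and $X_n(\omega)Y(\omega)-f(\omega,X_n(\omega))\geq\min(f^*(\omega,Y(\omega)),n)-1/n$, so that $\int X_nY-f(\cdot,X_n)\,d\mathbb{P}\to I_{f^*}(Y)$ by monotone convergence. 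Simultaneously, Lemma~\ref{lem:ba}(4) yields an increasing $(A_n)\subset\mathcal{F}$ with $\mathbb{P}(A_n)\nearrow 1$ and $|\nu_s|(A_n)=0$. For any $X^s\in\mathrm{dom}(I_f)$, the hybrid $Z_n:=X_n\mathbf{1}_{A_n}+X^s\mathbf{1}_{A_n^c}\in L^\infty$ satisfies $\nu_s(Z_n)=\nu_s(X^s)$ (since $\nu_s$ vanishes on $A_n$) while $\nu_r(Z_n)-I_f(Z_n)\to I_{f^*}(Y)$; taking sup over $X^s$ gives ``$\geq$''.

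Finally, when $\mathrm{dom}(I_f)=L^\infty$, the level sets $\{I_f\leq k\}$ are norm-closed by l.s.c.\ and cover the Banach space $L^\infty$, so by Baire category one has nonempty interior; convexity then spreads local boundedness-above (and hence continuity, via the standard local-Lipschitz property of convex functions that are locally bounded above) to all of $L^\infty$. For the singular term: if $\nu_s\neq 0$ there exists $X_0\in L^\infty$ with $\nu_s(X_0)\neq 0$, so $\sup_\lambda\nu_s(\lambda X_0)=+\infty$; if $\nu_s=0$ the sup is $0$. Substituting this dichotomy into (\ref{eq:ThmRock1}) collapses it to (\ref{eq:ThmRock2}). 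The main obstacle is the combined measurable-selection/Yosida-Hewitt construction in the reverse inequality: the selector must be truncated to stay in $L^\infty$, the singular functional must be neutralized on a measurable set whose complement is small, and the two regimes must be shown not to interfere under the limit.
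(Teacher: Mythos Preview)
The paper does not itself prove Theorem~\ref{thm:RockafellarClassical}; it is quoted from \cite{rockafellar71}. However, the paper proves a robust generalization (Theorem~\ref{thm:PointWise}), and specializing that argument to $\mathcal{P}=\{\mathbb{P}\}$ yields a proof of the classical statement, so I compare against that.

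Your overall architecture---Young's inequality for the upper bound, measurable selection spliced with the Yosida--Hewitt sets $(A_n)$ for the lower bound, and Baire category for the continuity claim---matches the paper's route. But the specific selector bound you invoke is false. You claim a measurable $X_n$ with $|X_n|\leq n$ and
\[
X_nY-f(\cdot,X_n)\geq \min\bigl(f^*(\cdot,Y),n\bigr)-\tfrac{1}{n}.
\]
Take $f\equiv 0$ and $Y\equiv \tfrac{1}{2}$: then $f^*(\cdot,Y)=+\infty$, so the right side equals $n-1/n$, while $\sup_{|x|\leq n}\bigl(xY-f(\cdot,x)\bigr)=n/2$. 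No selector bounded by $n$ can meet the bound. In general, restricting to $|x|\leq n$ only yields $g_n(\omega):=\sup_{|x|\leq n}\bigl(xY(\omega)-f(\omega,x)\bigr)$, and $g_n$ need not dominate $\min(f^*(\cdot,Y),n)$.

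The paper's proof of Theorem~\ref{thm:PointWise} (specialized to one $P$) avoids this by not truncating the selector a priori. For any $\alpha<I_{f^*}(d\nu_r/d\mathbb{P})$ it first uses Lemma~\ref{lem:IntSelection} to find $Z\in L^1$ with $Z<f^*(\cdot,d\nu_r/d\mathbb{P})$ a.s.\ and $E[Z]>\alpha$; then applies Rockafellar's selection lemma (Lemma~\ref{lem:MeasSelect}) to obtain a possibly \emph{unbounded} $X^0$ with $X^0\,d\nu_r/d\mathbb{P}-f(\cdot,X^0)\geq Z$; and only afterwards truncates on $C_n=A_n\cap\{|X^0|\leq n\}\cap\{|f(\cdot,X^0)|\leq n\}$, splicing with a fixed $X_s\in\mathrm{dom}(I_f)$ on $C_n^c$. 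Because the lower barrier $Z$ is a fixed $L^1$ function (rather than the moving target $\min(f^*,n)$), the remainder $\int_{C_n^c}(\cdots)$ vanishes by ordinary dominated convergence, and one concludes $\sup_X(\nu(X)-I_f(X))>\alpha+\beta$ for every $\beta<\sup_{X\in\mathrm{dom}(I_f)}\nu_s(X)$. Your scheme can be repaired either by adopting this two-step $(\alpha,Z)$ device, or by replacing $\min(f^*(\cdot,Y),n)$ with $g_n$ and arguing $g_n\nearrow f^*(\cdot,Y)$; in the latter case you must still control $\int_{A_n^c}g_n$ when $I_{f^*}(Y)=+\infty$, which is precisely why the paper's $\alpha$-truncation is the cleaner route.
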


\begin{rem}
  All the assertions of the theorem remain valid if we replace $\mathbb{P}$
  by any $P\sim \mathbb{P}$ with some obvious modifications, e.g., replacing
  $L^1$ by $L^1(P)$, $I_f$ by $I_{f,P}(X):=E_P[f(\cdot,X)]$, and
  $d\nu/d\mathbb{P}$ by $d\nu/dP$.
\end{rem}

\subsection{Robust Version of Integral Functionals}
\label{sec:RobIntFunc}

We now proceed to the robust situation.  Let $\mathcal{P}$ be as in
Section~\ref{sec:SetUp}, satisfying the assumption (A1) of convexity
and weak compactness, and $f$ a normal convex integrand. From a
technical reason, we assume that $f$ is \emph{finite valued}, i.e.,
\begin{equation}
  \label{eq:AsFFinite}
  \mathbb{P}(f(\cdot, x)<\infty,\,\forall x\in\mathbb{R})=1.
\end{equation}
In particular, $x\mapsto f(\omega,x)$ is continuous.  We then define
the robust version of integral functional associated to $f$:
\begin{equation}
  \label{eq:SupIntFunc1}
  \mathcal{I}_{f,\mathcal{P}}(X):=\sup_{P\in\mathcal{P}}E_P[f(\cdot,X)]=\sup_{P\in\mathcal{P}}\int_\Omega f(\omega,X(\omega))P(d\omega),
  \quad \forall X\in L^\infty,
\end{equation}
and its conjugate defined on $ba$:
\begin{equation}
  \label{eq:ConvConjRob1}
  (\mathcal{I}_{f,\mathcal{P}})^*(\nu):=\sup_{X\in L^\infty}(\nu(X)-\mathcal{I}_{f,\mathcal{P}}(X)),\quad \forall \nu\in ba=ba(\Omega,\mathcal{F},\mathbb{P}).
\end{equation}
In what follows, we investigate the regularity properties of
$\mathcal{I}_{f,\mathcal{P}}$ and the description of the conjugate $(\mathcal{I}_{f,\mathcal{P}})^*$
in the spirit of Rockafellar's theorem.

The functional $\mathcal{I}_{f,\mathcal{P}}$ is the pointwise supremum over $P\in\mathcal{P}$
of the $P$-integral functionals $X\mapsto I_{f,P}(X):=E_P[f(\cdot,X)]$
restricted to $L^\infty=L^\infty(\mathbb{P})$. But there is an alternative
point of view that $\mathcal{I}_{f,\mathcal{P}}$ is a pointwise supremum over $\mathcal{P}$ of
``$\mathbb{P}$-integral functionals'' associated to the integrands
$(dP/d\mathbb{P})f$. Note that these new integrands are normal whenever $f$
is, and we can naturally deal with all $I_{f,P}$ on the same domain
$L^\infty$. This simple point of view leads us to another type of
conjugate: $\tilde f(\omega, y,z):=(zf)^*(\omega, y)$, that is,
\begin{equation}
  \label{eq:ConjTilde1}
  \tilde f(\omega, y,z):=\sup_{x\in\mathbb{R}}(xy-zf(\omega, x)),\quad \omega\in\Omega,\,y\in\mathbb{R}, \,z\geq0.
\end{equation}
By definition, we have
\begin{equation}
  \label{eq:IneqYoungFtilde}
  xy\leq zf(\omega, x)+\tilde f(\omega,y,z),\quad\forall \omega\in\Omega,\,\forall x,y \in \mathbb{R},\, \forall z\geq 0.
\end{equation}
The next lemma is elementary.
\begin{lem}
  \label{lem:ConjTilde1}
  $\tilde f$ is a normal convex integrand on
  $\Omega\times\mathbb{R}\times\mathbb{R}_+$, i.e., $\tilde f$ is
  $\mathcal{F}\otimes\mathcal{B}(\mathbb{R})\otimes\mathcal{B}(\mathbb{R}_+)$-measurable, and $(y,z)\mapsto
  \tilde f(\omega,y,z)$ is a lower semicontinuous proper convex
  function on $\mathbb{R}\times\mathbb{R}_+$. Moreover, $\tilde f$ is explicitly
  written as:
  \begin{align}\label{eq:ConjTilde2}
    \tilde f(y,z)=
    \begin{cases}
      0&\text{ if }y=z=0,\\
      +\infty&\text{ if }y\neq0, z=0,\\
      zf^*(\cdot,y/z)&\text{ if }z>0.
    \end{cases}
  \end{align}

\end{lem}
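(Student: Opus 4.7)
The plan is to establish the three claims about $\tilde f$ in sequence, starting with the explicit piecewise formula (\ref{eq:ConjTilde2}), which is what makes the other assertions transparent. For $z=0$, the definition (\ref{eq:ConjTilde1}) gives $\tilde f(\omega,y,0)=\sup_{x\in\mathbb{R}}xy$, which equals $0$ if $y=0$ and $+\infty$ otherwise; for $z>0$, factoring out $z$ yields
\[
\tilde f(\omega,y,z)=z\sup_{x\in\mathbb{R}}\bigl((y/z)x-f(\omega,x)\bigr)=zf^*(\omega,y/z).
\]

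For joint measurability, the finiteness assumption (\ref{eq:AsFFinite}) makes $x\mapsto f(\omega,x)$ continuous on $\mathbb{R}$ for a.e.\ $\omega$, so the supremum in (\ref{eq:ConjTilde1}) can be restricted to $x\in\mathbb{Q}$. Each term $(\omega,y,z)\mapsto xy-zf(\omega,x)$ is jointly $\mathcal{F}\otimes\mathcal{B}(\mathbb{R})\otimes\mathcal{B}(\mathbb{R}_+)$-measurable by the normality of $f$, and a countable supremum preserves joint measurability; thus $\tilde f$ is normal in its three arguments.

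For convexity and lower semicontinuity in $(y,z)$, I would observe that for each fixed $(\omega,x)$ the map $(y,z)\mapsto xy-zf(\omega,x)$ is continuous and affine on $\mathbb{R}\times\mathbb{R}_+$; therefore $\tilde f(\omega,\cdot,\cdot)$, being a pointwise supremum of affine functions, is convex and lower semicontinuous. Properness follows from $\tilde f(\omega,0,0)=0$ (so $\tilde f\not\equiv+\infty$) together with the trivial lower bound $\tilde f(\omega,y,z)\geq -z\,f(\omega,0)>-\infty$ obtained by taking $x=0$ in (\ref{eq:ConjTilde1}). I do not foresee a genuine obstacle; the only subtle point is the consistent treatment of the boundary $z=0$, where the third case of the formula is extended by lower semicontinuity to recover the first two cases, which is precisely why the explicit formula is recorded.
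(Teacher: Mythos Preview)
Your proof is correct and follows essentially the same approach as the paper: properness via $\tilde f(\omega,0,0)=0$ and the lower bound $-zf(\omega,0)$, convexity and lower semicontinuity from $\tilde f$ being a pointwise supremum of affine functions in $(y,z)$, measurability by reducing the supremum to rationals using the continuity of $x\mapsto f(\omega,x)$, and the explicit formula by direct computation. The only difference is the order of presentation---you derive the explicit formula first whereas the paper records it last---but the ingredients are identical.
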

\begin{proof}
  Since $\tilde f(\cdot, y,z)\geq -zf(\cdot, 0)>-\infty$ and $\tilde
  f(\cdot, 0,0)=0$, $\tilde f$ is proper.  The lower semicontinuity
  and the convexity are consequences of the fact that $\tilde f$ is a
  point-wise supremum of linear functions $(y,z)\mapsto xy-zf(x)$ when
  $x$ runs through all reals. As $x\mapsto f(\cdot, x)$ is lower
  semicontinuous (actually continuous), the supremum over reals can be
  replaced by that over \emph{rationals}, which shows the
  measurability of $\tilde f$.  Finally, (\ref{eq:ConjTilde2}) is
  verified by direct computation.
\end{proof}

Using $\tilde f$, we introduce another type of \emph{integral
  functional} which plays the role of $I_{f^*}$ in the classical
case:
\begin{equation}
  \label{eq:ConjJ}
  \mathcal{J}_{\tilde f,\mathcal{P}}(Y):=\inf_{P\in\mathcal{P}}E[\tilde f(\cdot,Y,dP/d\mathbb{P})],\quad Y\in L^1.
\end{equation}

We first verify that the functionals $\mathcal{I}_{f,\mathcal{P}}$ and $\mathcal{J}_{\tilde
  f,\mathcal{P}}$ are indeed well-defined, under natural integrability
conditions corresponding to (\ref{eq:R1}) and (\ref{eq:R2}).
\begin{prop}\label{prop:SupIntWellDef}
  Assume (\ref{eq:AsFFinite}) and that
  \begin{align}
    \label{eq:IntegRobust1}
    &\text{ for some $X\in L^\infty$, }\sup_{P\in\mathcal{P}}E_P[f(\cdot, X)^+]<\infty;\\
    \label{eq:IntegRobust2}
    &\text{ for any $P\in\mathcal{P}$, there exists $Y\in L^1$ such that }\tilde f(\cdot,Y,dP/d\mathbb{P})^+\in L^1.
  \end{align}
  Then we have the following.
  \begin{description}[(a)]
  \item[(a)] $\mathcal{I}_{f,\mathcal{P}}$ is well-defined as a lower semicontinuous
    proper convex functional on $L^\infty$;

  \item[(b)] $\mathcal{J}_{\tilde f,\mathcal{P}}$ is well-defined as a proper convex
    functional on $L^1$;
  \item[(c)] for all $X\in L^\infty$ and $Y\in L^1$,
    \begin{equation}
      \label{eq:YoungIneqRob}
      E[XY]\leq \mathcal{I}_{f,\mathcal{P}}(X)+\mathcal{J}_{\tilde f,\mathcal{P}}(Y).
    \end{equation}
  \end{description}
\end{prop}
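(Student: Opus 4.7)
The plan is to reduce all three assertions to Rockafellar's classical Theorem~\ref{thm:RockafellarClassical} applied $P$-by-$P$, with Young's inequality (\ref{eq:IneqYoungFtilde}) providing the bridge between the two functionals. I would prove (a), then (b), and finally (c).

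For (a), condition (\ref{eq:R1}) under each $P\in\mathcal{P}$ is immediate from (\ref{eq:IntegRobust1}) since $P\ll\mathbb{P}$ and the same $X_0$ works. For (\ref{eq:R2}) under $P$, Lemma~\ref{lem:ConjTilde1} shows that if $Y\in L^1(\mathbb{P})$ makes $\tilde f(\cdot,Y,dP/d\mathbb{P})^+$ $\mathbb{P}$-integrable, then $Y$ must vanish on $\{dP/d\mathbb{P}=0\}$, and $Y':=Y/(dP/d\mathbb{P})$ (set to $0$ on that $P$-null set) lies in $L^1(P)$ with $E_P[f^*(\cdot,Y')^+]=E[\tilde f(\cdot,Y,dP/d\mathbb{P})^+]<\infty$. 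The remark after Theorem~\ref{thm:RockafellarClassical} then gives that $I_{f,P}$ is proper convex lower semicontinuous on $L^\infty(P)\supset L^\infty$. Since pointwise suprema of proper convex norm-lsc functionals are convex and norm-lsc, $\mathcal{I}_{f,\mathcal{P}}$ inherits these properties on $L^\infty$; well-definedness in $\mathbb{R}\cup\{+\infty\}$ is automatic from each $I_{f,P}(X)>-\infty$, and propriety follows from $\mathcal{I}_{f,\mathcal{P}}(X_0)\leq\sup_{P\in\mathcal{P}}E_P[f(\cdot,X_0)^+]<\infty$.

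For (b), applying (\ref{eq:IneqYoungFtilde}) pointwise with $x=X_0$ yields the uniform-in-$P$ bound $E[\tilde f(\cdot,Y,dP/d\mathbb{P})^-]\leq\sup_{P'\in\mathcal{P}}E_{P'}[f(\cdot,X_0)^+]+\|X_0\|_\infty\|Y\|_1<\infty$, which simultaneously makes each $E[\tilde f(\cdot,Y,dP/d\mathbb{P})]$ well-defined in $\mathbb{R}\cup\{+\infty\}$ and ensures $\mathcal{J}_{\tilde f,\mathcal{P}}(Y)>-\infty$; propriety is then immediate from (\ref{eq:IntegRobust2}). Convexity combines the joint convexity of $\tilde f(\omega,\cdot,\cdot)$ (Lemma~\ref{lem:ConjTilde1}) with convexity of $\mathcal{P}$: testing $\lambda Y_1+(1-\lambda)Y_2$ against the convex combination $\lambda P_1+(1-\lambda)P_2\in\mathcal{P}$ and applying pointwise joint convexity before taking successive infima over $P_1,P_2\in\mathcal{P}$ yields the required inequality. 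Finally, (c) follows by integrating (\ref{eq:IneqYoungFtilde}) at $(X(\omega),Y(\omega),dP/d\mathbb{P}(\omega))$: the split $E[XY]\leq E_P[f(\cdot,X)]+E[\tilde f(\cdot,Y,dP/d\mathbb{P})]$ is legitimate because all negative parts are integrable (by (a), (b), and $XY\in L^1$); bounding $E_P[f(\cdot,X)]$ above by $\mathcal{I}_{f,\mathcal{P}}(X)$ and taking the infimum over $P$ gives (\ref{eq:YoungIneqRob}). The one genuinely delicate aspect is the uniformity in (b): obtaining $\mathcal{J}_{\tilde f,\mathcal{P}}>-\infty$ rather than merely each $E[\tilde f(\cdot,Y,dP/d\mathbb{P})]>-\infty$ really uses the joint bound in (\ref{eq:IntegRobust1}), whereas every other step only needs the per-$P$ version.
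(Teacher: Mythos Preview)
Your proposal is correct and follows essentially the same path as the paper's proof: Young's inequality (\ref{eq:IneqYoungFtilde}) supplies the uniform-in-$P$ lower bounds for both functionals, convexity of $\mathcal{J}_{\tilde f,\mathcal{P}}$ comes from joint convexity of $\tilde f$ combined with convexity of $\mathcal{P}$, and (c) is obtained by integrating (\ref{eq:IneqYoungFtilde}). The only cosmetic difference is that for (a) you invoke Theorem~\ref{thm:RockafellarClassical} per $P$ as a black box (your translation of (\ref{eq:IntegRobust2}) into (\ref{eq:R2}) is exactly Remark~\ref{rem:IntegFuncAss}), whereas the paper carries out the elementary Fatou argument directly from the same pointwise bound.
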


\begin{proof}
  (a). Fix $P\in\mathcal{P}$, and take $Y\in L^1$ as in
  (\ref{eq:IntegRobust2}).  By (\ref{eq:IneqYoungFtilde}) , we have
  \begin{align}\label{eq:ProofLemYoungConj1A}
    \frac{dP}{d\mathbb{P}}f(\cdot,X)\geq XY-\tilde f\left(\cdot,
      Y,\frac{dP}{d\mathbb{P}}\right)^+,\quad \forall X\in L^\infty.
  \end{align}
  Since the right hand side is integrable, we see that
  $E_P[f(\cdot,X)]>-\infty$ for all $X\in L^\infty$, while
  $E_P[f(\cdot, X)]<\infty$ for some $X\in L^\infty$ by
  (\ref{eq:IntegRobust1}).  Thus, the functional $X\mapsto
  E_P[f(\cdot,X)]$ is a proper convex functional on $L^\infty$, where
  the convexity is clear from the convexity of $f$.  Also, if
  $(X_n)\subset L^\infty$ is any norm convergent sequence with the
  limit $X$, then $X^nY\rightarrow XY$, a.s. and in $L^1$. Then
  (\ref{eq:ProofLemYoungConj1A}) allows us to use Fatou's lemma to
  conclude
  \begin{align*}
    E_P[f(\cdot,X)]\leq \liminf_{n\rightarrow\infty}E_P[f(\cdot, X_n)],
  \end{align*}
  hence we have that $X\mapsto E_P[f(\cdot, X)]$ is a lower
  semicontinuous proper convex functional on $L^\infty$. Since this
  holds for all $P\in\mathcal{P}$, $X\mapsto \sup_{P\in\mathcal{P}}E_P[f(\cdot,X)]$ is
  again a lower semicontinuous convex functional as a point-wise
  supremum of such functionals, and $I_{f,\mathcal{P}}(X)>-\infty$ for all
  $X\in L^\infty$. Finally, $\mathcal{I}_{f,\mathcal{P}}(X)<\infty$ for some $X\in
  L^\infty$ again by (\ref{eq:IntegRobust1}), hence $\mathcal{I}_{f,\mathcal{P}}$ is
  proper.

  (b), (c). Take $X$ as in (\ref{eq:IntegRobust1}). Then $f(\cdot,
  X)\in L^1(P)$ for all $\mathcal{P}$, and by (\ref{eq:IneqYoungFtilde}),
  \begin{align*}
    \tilde f(\cdot,Y,dP/d\mathbb{P})\geq
    XY-\frac{dP}{d\mathbb{P}}f(\cdot,X),\quad\forall Y\in L^1,\,\forall
    P\in\mathcal{P}.
  \end{align*}
  Therefore, $\mathcal{J}_{f,\mathcal{P}}$ is proper since
  \begin{equation}\label{eq:ProofYoungconj2}
    \begin{split}
      \inf_{P\in\mathcal{P}}E[\tilde f(\cdot,Y,dP/d\mathbb{P})]&\geq
      \inf_{P\in\mathcal{P}}\left(E[XY]-E_P[f(\cdot,X)]\right)\\
      &=E[XY]-\sup_{P\in\mathcal{P}}E_P[f(\cdot,X)]>\infty,\,\forall Y\in
      L^1.
    \end{split}
  \end{equation}
  The convexity of $\mathcal{J}_{\tilde f,\mathcal{P}}$ follows from that of
  $(y,z)\mapsto \tilde f(\cdot,y,z)$ (Lemma~\ref{lem:ConjTilde1}) and
  of $\mathcal{P}$, and we have (b).  As for (c), we may assume
  $\mathcal{I}_{f,\mathcal{P}}(X)<\infty$ and $\mathcal{J}_{\tilde f,\mathcal{P}}(Y)<\infty$, since
  otherwise the assertion is trivial. But this case is already proved
  by (\ref{eq:ProofYoungconj2}).
\end{proof}

\begin{rem}\label{rem:IntegFuncAss}
  The condition (\ref{eq:IntegRobust2}) is equivalent to:
\begin{equation}
  \label{eq:B2equiv}
  \forall P\in\mathcal{P}, \, \exists \widetilde{Y} \in L^1(P)\text{ such that } f^*(\cdot,\widetilde{Y})\in L^1(P).
\end{equation}
Indeed, if $\tilde f(\cdot, Y,dP/d\mathbb{P})^+\in L^1$, then $Y=0$ on
$\{\frac{dP}{d\mathbb{P}}=0\}$ by (\ref{eq:ConjTilde2}) and $\tilde f(\cdot,
Y,dP/d\mathbb{P})^+ \linebreak
=1_{\{\frac{dP}{d\mathbb{P}}>0\}}\frac{dP}{d\mathbb{P}}f^*(\cdot,
Y/(dP/d\mathbb{P}))^+$. Therefore,
$\widetilde{Y}:=1_{\{\frac{dP}{d\mathbb{P}}>0\}}Y/(dP/d\mathbb{P})$, which is $P$-integrable
since $Y$ is $\mathbb{P}$-integrable, satisfies the condition. Conversely, if
$\widetilde{Y}\in L^1(P)$ satisfies $f^*(\cdot, \widetilde{Y})^+\in L^1(P)$, we set $Y=\widetilde{Y}
dP/d\mathbb{P}$. Then we have by (\ref{eq:ConjTilde2}) that $\tilde f(\cdot,
Y,dP/d\mathbb{P})^+=\frac{dP}{d\mathbb{P}}f^*(\cdot, \widetilde{Y})^+\in L^1$. In particular,
(\ref{eq:IntegRobust2}) coincides with (\ref{eq:R2}) of
Theorem~\ref{thm:RockafellarClassical} when $\mathcal{P}$ is a singleton.
\end{rem}

\subsection{A Robust Version of the Rockafellar Theorem}
\label{sec:IntFuncConj}

We have arrived at the main theorem of this section, which is also the
heart of the whole paper. Recall that $\mathcal{I}_{f,\mathcal{P}}$ and $\mathcal{J}_{\tilde
  f,\mathcal{P}}$ are defined respectively by (\ref{eq:ConvConjRob1}) and
(\ref{eq:ConjJ}), and the normal convex integrand $f$ is assumed to be
finite valued, thus in particular, $x\mapsto f(\omega,x)$ is
continuous for a.e. $\omega$.

To obtain a nice description of the conjugate $(\mathcal{I}_{f,\mathcal{P}})^*$, the
integrability assumption (\ref{eq:IntegRobust1}) is not enough, and we
need a slightly stronger assumption. Let 
\begin{align}
  \label{eq:IntFuncDomRob}
  \mathcal{D}&:=\{X\in L^\infty:\,\{f(\cdot, X)^+dP/d\mathbb{P}\}_{P\in\mathcal{P}}\text{ is uniformly integrable}\}.
\end{align}
If $X\in \mathcal{D}$, then
$\sup_{P\in\mathcal{P}}E_P[f(\cdot,X)]\leq\sup_{P\in\mathcal{P}}E_P[f(\cdot,
X)^+]<\infty$, hence $X\in \mathrm{dom}(\mathcal{I}_{f,\mathcal{P}})$, i.e.,
$\mathcal{D}\subset\mathrm{dom}(\mathcal{I}_{f,\mathcal{P}})$.  The proof of the next theorem
will be given in Section~\ref{sec:ProofIntRepMain}.
\begin{thm}
  \label{thm:PointWise}
  Assume (\ref{eq:AsFFinite}), (\ref{eq:IntegRobust2}), (A1) (i.e.,
  $\mathcal{P}$ is convex and weakly compact) and that $\mathcal{D}\neq\emptyset$.
  Then for all $\nu\in ba$ with the Yosida-Hewitt decomposition
  $\nu=\nu_r+\nu_s$,
    \begin{equation}
      \label{eq:EstConj}
      \begin{split}
        \mathcal{J}_{\tilde f,\mathcal{P}}\left(\frac{d\nu_r}{d\mathbb{P}}\right) +\sup_{X\in
          \mathcal{D}}\nu_s(X) &\leq (\mathcal{I}_{f,\mathcal{P}})^*(\nu)\\
        &\leq \mathcal{J}_{\tilde f,\mathcal{P}}\left(\frac{d\nu_r}{d\mathbb{P}}\right)
        +\sup_{X\in\mathrm{dom}(\mathcal{I}_{f,\mathcal{P}})}\nu_s(X).
      \end{split}
    \end{equation}

\end{thm}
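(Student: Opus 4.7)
Denote $Y := d\nu_r/d\mathbb{P}$. I would prove the two inequalities of (\ref{eq:EstConj}) separately.

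\textbf{Upper bound.} This follows directly from Young's inequality (\ref{eq:IneqYoungFtilde}). For $X \in \mathrm{dom}(\mathcal{I}_{f,\mathcal{P}})$ and $P \in \mathcal{P}$, integrating (\ref{eq:IneqYoungFtilde}) with $y = Y$ and $z = dP/d\mathbb{P}$ against $\mathbb{P}$ yields $\nu_r(X) - E_P[f(\cdot, X)] \leq E[\tilde f(\cdot, Y, dP/d\mathbb{P})]$. Since $E_P[f(\cdot, X)] \leq \mathcal{I}_{f,\mathcal{P}}(X)$ and $P$ is arbitrary, taking the infimum in $P$ gives $\nu_r(X) - \mathcal{I}_{f,\mathcal{P}}(X) \leq \mathcal{J}_{\tilde f, \mathcal{P}}(Y)$. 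Adding $\nu_s(X) \leq \sup_{X' \in \mathrm{dom}(\mathcal{I}_{f,\mathcal{P}})} \nu_s(X')$ and taking the supremum over $X$ completes this direction.

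\textbf{Lower bound.} I would use a splicing-plus-minimax strategy. Fix $X_0 \in \mathcal{D}$ and the Yosida--Hewitt sets $(A_n)$ from Lemma~\ref{lem:ba}(4), so that $\mathbb{P}(A_n) \nearrow 1$ and $|\nu_s|(A_n) = 0$. For any $X_1 \in L^\infty$, form the splicing $X^{(n)} := X_0 \mathbf{1}_{A_n^c} + X_1 \mathbf{1}_{A_n}$. Then $\nu_s(X^{(n)}) = \nu_s(X_0)$ exactly, and $\nu_r(X^{(n)}) \to \nu_r(X_1)$ by dominated convergence (using $Y \in L^1$). The key claim is $\limsup_n \mathcal{I}_{f,\mathcal{P}}(X^{(n)}) \leq \mathcal{I}_{f,\mathcal{P}}(X_1)$, which I would establish from a uniform-in-$P$ estimate on $E_P[(f(\cdot,X_0) - f(\cdot,X_1))\mathbf{1}_{A_n^c}]$: on the positive side via the uniform integrability built into the definition of $\mathcal{D}$, and on the negative side via a Young-type domination $f(\cdot,\cdot)^-\,dP/d\mathbb{P} \leq |\,\cdot\, Y'| + \tilde f(\cdot,Y',dP/d\mathbb{P})^+$ for a $Y'\in L^1$ furnished by (\ref{eq:IntegRobust2}). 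This yields $(\mathcal{I}_{f,\mathcal{P}})^*(\nu) \geq \nu_s(X_0) + (\nu_r(X_1) - \mathcal{I}_{f,\mathcal{P}}(X_1))$, and after taking the supremum in $X_1 \in L^\infty$ and then in $X_0 \in \mathcal{D}$, the lower bound reduces to the purely regular inequality $(\mathcal{I}_{f,\mathcal{P}})^*(\nu_r) \geq \mathcal{J}_{\tilde f,\mathcal{P}}(Y)$.

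For this last inequality I would apply a minimax theorem (e.g.\ Sion's) to $\phi(X,P) := \nu_r(X) - E_P[f(\cdot,X)]$, which is concave in $X \in L^\infty$ and affine in $P$ (viewed through its $\mathbb{P}$-density). Since (A1) makes $\mathcal{P}$ convex and $\sigma(L^1,L^\infty)$-compact, minimax equality produces $(\mathcal{I}_{f,\mathcal{P}})^*(\nu_r) = \inf_{P\in\mathcal{P}} \sup_{X\in L^\infty} \phi(X,P)$. For each fixed $P$, Rockafellar's classical interchange of integral and pointwise supremum gives $\sup_{X\in L^\infty} \phi(X,P) = E[\tilde f(\cdot, Y, dP/d\mathbb{P})]$, where the integrability (\ref{eq:IntegRobust2}) (equivalently, Remark~\ref{rem:IntegFuncAss}) is what lets a near-equalizer of the pointwise Young inequality (\ref{eq:IneqYoungFtilde}) be realized in $L^\infty$ by truncation. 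Taking the infimum over $P$ yields $\mathcal{J}_{\tilde f,\mathcal{P}}(Y)$, as required.

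\textbf{Main obstacle.} The principal difficulty is justifying the minimax interchange: Sion's theorem requires $\sigma(L^1,L^\infty)$-upper semicontinuity of $P \mapsto E_P[f(\cdot,X)]$ on $\mathcal{P}$ for each $X \in L^\infty$, which is not automatic when $f(\cdot,X)$ is unbounded, since weak $L^1$-convergence of densities preserves integrals only against bounded test functions. I would circumvent this by first replacing $f$ by the truncation $f_k := f \wedge k$---whose positive part is bounded above---running the entire minimax-plus-interchange argument at level $k$, and then passing to the limit $k \to \infty$ via monotone convergence together with the lower semicontinuity of $\mathcal{I}_{f,\mathcal{P}}$ from Proposition~\ref{prop:SupIntWellDef}(a). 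A secondary bookkeeping issue is the uniformity in $P$ of the splicing estimate on $A_n^c$, which the UI property of $\mathcal{D}$ is tailor-made to handle on the positive side, and the Young domination above handles on the negative side.
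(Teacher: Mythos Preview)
Your upper bound matches the paper's. For the lower bound, however, the paper takes the opposite order of operations from yours, and this reordering is precisely what makes the argument go through.

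The paper applies a minimax theorem \emph{first}, with the supremum restricted to $\mathcal{D}$ rather than $L^\infty$: for $X\in\mathcal{D}$ the uniform integrability of $\{f(\cdot,X)^+\,dP/d\mathbb{P}\}_{P\in\mathcal P}$ forces $P\mapsto E_P[f(\cdot,X)]$ to be weakly upper semicontinuous on $\mathcal P$ (Lemma~\ref{lem:USC}), so the minimax hypotheses hold without any truncation of $f$. Only \emph{after} the interchange---with $P$ now fixed---does the paper construct, via measurable selection (Lemmas~\ref{lem:IntSelection}--\ref{lem:MeasSelect}) and Yosida--Hewitt splicing, an element of $\mathcal{D}$ nearly achieving $E[\tilde f(\cdot,Y,dP/d\mathbb{P})]+\nu_s(X_s)$. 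Because $P$ is already fixed at that stage, no uniformity in $P$ is ever required.

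Your route hits two concrete gaps. First, the splicing claim $\limsup_n\mathcal{I}_{f,\mathcal P}(X^{(n)})\le\mathcal{I}_{f,\mathcal P}(X_1)$ needs $\sup_{P\in\mathcal P}E_P[f(\cdot,X_1)^-\mathbf 1_{A_n^c}]\to 0$, but the Young domination you invoke from (\ref{eq:IntegRobust2}) supplies only a $P$-\emph{dependent} $Y'=Y'_P$, so the bound $|X_1Y'_P|+\tilde f(\cdot,Y'_P,dP/d\mathbb{P})^+$ is not uniform in $P$ and the claim is unjustified. Second, your fix for the minimax obstacle---replacing $f$ by $f_k=f\wedge k$---destroys convexity of $x\mapsto f_k(\omega,x)$, so $X\mapsto \nu_r(X)-E_P[f_k(\cdot,X)]$ is no longer concave and Sion's theorem does not apply; truncating from below ($f\vee(-k)$) keeps convexity but leaves the obstacle intact since $(f\vee(-k))^+=f^+$. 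The whole point of $\mathcal{D}$ in the paper's argument is to deliver the missing upper semicontinuity without touching $f$.
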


\begin{cor}[Restriction to $ba^\sigma\simeq L^1$]
  \label{cor:RobRockL1}
  For any $\nu\in ba^\sigma$, we have
  \begin{equation}
    \label{eq:RepIntRobL1}
    (\mathcal{I}_{f,\mathcal{P}})^*(\nu)=\mathcal{J}_{\tilde f,\mathcal{P}}(d\nu/d\mathbb{P})=\inf_{P\in\mathcal{P}}E[\tilde f(\cdot, d\nu/d\mathbb{P},dP/d\mathbb{P})].
  \end{equation}
  In particular, $\mathcal{J}_{\tilde f,\mathcal{P}}$ is also lower semicontinuous.
\end{cor}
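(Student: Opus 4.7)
The plan is to derive this corollary directly from Theorem~\ref{thm:PointWise} by exploiting the fact that the two bounds in \eqref{eq:EstConj} collapse when $\nu$ is $\sigma$-additive. The main observation is that the Yosida--Hewitt decomposition is unique, so any $\nu \in ba^\sigma$ must have $\nu_r = \nu$ and $\nu_s = 0$. Consequently, $\nu_s(X) = 0$ for every $X \in L^\infty$, and since the hypothesis $\mathcal{D} \neq \emptyset$ of Theorem~\ref{thm:PointWise} guarantees $\emptyset \neq \mathcal{D} \subset \mathrm{dom}(\mathcal{I}_{f,\mathcal{P}})$, both suprema
\[
\sup_{X \in \mathcal{D}} \nu_s(X) \quad \text{and} \quad \sup_{X \in \mathrm{dom}(\mathcal{I}_{f,\mathcal{P}})} \nu_s(X)
\]
equal zero (neither is $-\infty$). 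Inserting this into the sandwich \eqref{eq:EstConj} immediately yields $(\mathcal{I}_{f,\mathcal{P}})^*(\nu) = \mathcal{J}_{\tilde f,\mathcal{P}}(d\nu/d\mathbb{P})$, and the second equality in \eqref{eq:RepIntRobL1} is then just the definition \eqref{eq:ConjJ} of $\mathcal{J}_{\tilde f,\mathcal{P}}$.

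For the lower semicontinuity assertion, I would use the standard fact that any convex conjugate is weak-$*$ lower semicontinuous. Specifically, by its very definition \eqref{eq:ConvConjRob1}, $(\mathcal{I}_{f,\mathcal{P}})^*$ is the pointwise supremum over $X \in L^\infty$ of the $\sigma(ba, L^\infty)$-continuous affine maps $\nu \mapsto \nu(X) - \mathcal{I}_{f,\mathcal{P}}(X)$, hence is $\sigma(ba, L^\infty)$-lsc on $ba$. Restricting to the subspace $ba^\sigma \simeq L^1$ (with $\nu \leftrightarrow d\nu/d\mathbb{P}$), this restricted topology coincides with $\sigma(L^1, L^\infty)$, so $\mathcal{J}_{\tilde f,\mathcal{P}}$ inherits $\sigma(L^1, L^\infty)$-lsc via the identity just established. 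Since $L^1$-norm convergence implies $\sigma(L^1, L^\infty)$-convergence, norm lower semicontinuity follows.

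There is no real obstacle in this corollary: all the technical work has been absorbed into Theorem~\ref{thm:PointWise}, whose proof handles both inequalities in \eqref{eq:EstConj}. The only thing to check carefully is that the two suprema over $\nu_s$ are indeed $0$ rather than $-\infty$, which is why the nonemptiness of $\mathcal{D}$ (assumed in the theorem) matters at this step.
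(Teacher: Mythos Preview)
Your proposal is correct and follows essentially the same approach as the paper: the first assertion is obtained by observing that $\nu_s=0$ makes both singular terms in \eqref{eq:EstConj} vanish, and the lower semicontinuity is deduced from the general fact that a convex conjugate is weak-$*$ lower semicontinuous. The paper phrases the latter via the dual pair $\langle L^\infty, L^1\rangle$ directly rather than passing through $ba$ and restricting, but this is only a cosmetic difference.
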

\begin{proof}
  The first assertion is immediate from (\ref{eq:EstConj}) since the
  second terms in the left and right hand sides are zero if
  $\nu_s=0$. The lower semicontinuity follows from a general fact: if
  $\langle E,E'\rangle$ is a dual pair (see
  \cite{aliprantis_border06} for definition), and $\phi$ is a convex
  function on $E$, the conjugate $\phi^*$ on $E'$ is
  $\sigma(E',E)$-lower semicontinuous. Here $E'$ does not have to be
  the \emph{topological dual} of $E$ under the original topology.
\end{proof}

When $\mathcal{P}$ is a singleton, two sets $\mathcal{D}$ and
$\mathrm{dom}(\mathcal{I}_{f,\{P\}})$ coincide, hence the two inequalities in
(\ref{eq:EstConj}) actually holds as a single equality, which is
nothing other than the Rockafellar theorem
(Theorem~\ref{thm:RockafellarClassical}).  In the general case,
however, the inclusion $\mathcal{D}\subset \mathrm{dom}(\mathcal{I}_{f,\mathcal{P}})$ can be
strict (see Examples~\ref{ex:1} and \ref{ex:2}).

To illustrate the situation, we give an alternative form of
integrability conditions. Define
\begin{align*}
  \mathcal{L}^1(\mathcal{P})&:=\left\{X\in L^0:\, \|X\|_{1,\mathcal{P}}:=\sup_{P\in\mathcal{P}}E_P[|X|]<\infty\right\}.
\end{align*}
This is an ``$L^1$-type space'' under the \emph{sublinear expectation}
$X\mapsto\sup_{P\in\mathcal{P}}E_P[X]$. If we introduce an equivalence
relation by $X\sim_{\mathcal{P}} Y$ iff $X=Y$, $\mathcal{P}$-\emph{quasi surely}
($\Leftrightarrow$ $P$-a.s. for all $P\in\mathcal{P}$), the resulting quotient
space $L^1(\mathcal{P}):=\mathcal{L}^1(\mathcal{P})/\!\sim_\mathcal{P}$ is indeed a Banach space with
the norm $\|\cdot\|_{1,\mathcal{P}}$. Then the integrability condition
(\ref{eq:IntegRobust1}) is equivalent to saying that there exists some
$X\in L^\infty$ such that $f(\cdot, X)^+\in L^1(\mathcal{P})$. If $\mathcal{P}$ is a
singleton, the $L^1$-type space $L^1(\mathcal{P})$ is nothing but the usual
$L^1$ space, and the condition (\ref{eq:IntegRobust1}) coincides with
(\ref{eq:R1}) in the Rockafellar theorem.

Under the sublinear expectation, there is another natural $L^1$-type
space:
\begin{align*}
  \mathcal{L}_u^1(\mathcal{P}):=\left\{X\in L^0:\, \lim_{N\rightarrow \infty}\sup_{P\in\mathcal{P}}E_P[|X|1_{\{|X|\geq N\}}]=0\right\},
\end{align*}
and we set $L^1_u(\mathcal{P}):=\mathcal{L}_u^1(\mathcal{P})/\!\sim_\mathcal{P}$. It is immediate to
show that $L^1_u(\mathcal{P})\subset L^1(\mathcal{P})$, and
$L^1(\{\mathbb{P}\})=L^1_u(\{\mathbb{P}\})=L^1(\mathbb{P})$. $L^1_u(\mathcal{P})$ is also a Banach
space endowed with the norm $\|\cdot\|_{1,\mathcal{P}}$, hence $L^1_u(\mathcal{P})$ is
closed in $L^1(\mathcal{P})$. Moreover, we can show also that
\begin{equation}
  \label{eq:Equivalence1}
  X\in L^1_u(\mathcal{P})\quad\Leftrightarrow\quad \{XdP/d\mathbb{P}\}_{P\in\mathcal{P}}\text{ is uniformly integrable.}
\end{equation}
In particular, the set $\mathcal{D}$ is equivalently written as:
\begin{equation}
  \label{eq:Dalternative}
  \mathcal{D}=\{X\in L^\infty:\, f(\cdot, X)^+\in L^1_u(\mathcal{P})\}.
\end{equation}

\begin{rem}
  \label{rem:DenisHuPeng}
  The spaces $L^1(\mathcal{P})$ and $L^1_u(\mathcal{P})$ correspond respectively to
  $\mathbb{L}^1$ and $\mathbb{L}^1_b$ in
  \cite[Section~2.2]{denis11:_funct_spaces_capac_relat_sublin_expec} which
  provide some basic properties including a counter part of
  (\ref{eq:Equivalence1}) in a slightly different setting. In
  \citep{denis11:_funct_spaces_capac_relat_sublin_expec} $\Omega$ is a complete
  metric space with $\mathcal{F}$ being its Borel $\sigma$-field but $\mathcal{P}$ is
  not necessarily dominated by a single probability $\mathbb{P}$, while we
  assume $\mathcal{P}$ is dominated but impose no topological restriction to
  $\Omega$ and $\mathcal{F}$. Finally, we used the subscript ``$u$'' to keep
  the \emph{uniform integrability} in mind, while the subscript
  ``$b$'' in \citep{denis11:_funct_spaces_capac_relat_sublin_expec} comes from
  the fact that $\mathbb{L}_b$ is the completion under $\|\cdot\|_{1,\mathcal{P}}$ of
  the space of \emph{bounded} functions.
\end{rem}

\begin{ex}[$\mathcal{D}=\emptyset$ but $\mathrm{dom}(\mathcal{I}_{f,\mathcal{P}})=L^\infty$]
\label{ex:1}
We first give an extreme example following \citep[Example
20]{denis11:_funct_spaces_capac_relat_sublin_expec}. We take
$\Omega=\mathbb{N}$ with $\mathcal{F}=2^\mathbb{N}$. In this case, \emph{every} probability
measure is absolutely continuous w.r.t. $\mathbb{P}$, given by
$\mathbb{P}(\{n\})=2^{-n}$. For each $n\in\mathbb{N}$, we define $P_n$ by
  \begin{equation}
    \label{eq:CountEx1}
    P_n(\{1\})=1-1/n,\, P_n(\{n\})=1/n,\, P_n(\{k\})=0\text{ if }k\not\in\{1,n\}.
  \end{equation}
  Then we set $\mathcal{P}=\overline{\mathrm{conv}}(P_n;\, n\in\mathbb{N})$. This
  $\mathcal{P}$ is weakly compact. To see this, it suffices to show the
  uniform integrability of $\{P_n\}_n$ since closed convex hull of
  uniformly integrable family is again uniformly integrable (see
  \cite[Th. II, 20]{dellacherie_meyer78}). Noting that
  $dP_n/d\mathbb{P}=2(1-1/n)1_{\{1\}}+2^n/n1_{\{n\}}$,
  $E[(dP_n/d\mathbb{P})1_{\{dP_n/d\mathbb{P}\geq N\}}]=(1/n)1_{\{2^n/n\geq N\}}$ for
  every $N\geq 2$, hence $\sup_nE[(dP_n/d\mathbb{P})1_{\{dP_n/d\mathbb{P}\geq
    N\}}]=1/n_N$, where $n_N:=\min(n:\, 2^n/n\geq N)\rightarrow
  \infty$ as $N\rightarrow\infty$.

  Consider a random variable $W$ defined by $W(n)=n$. Then since
  $E_{P_n}[W]=1\cdot (1-1/n)+n\cdot (1/n)=2-1/n$, we see
  $\sup_{P\in\mathcal{P}}E_P[W]=\sup_nE_{P_n}[W]=2$.  But,
  \begin{align*}
    E_{P_n}[W1_{\{W\geq N\}}]=
    \begin{cases}
        0&\text{ if }n<N,\\
        1&\text{ if }n\geq N.
      \end{cases}
    \end{align*}
  Hence $\sup_nE_{P_n}[W1_{\{W\geq N\}}]=1\not\rightarrow 0$.

  Let $f(\omega,x)=W(\omega)e^x$. For any constant random variable
  $a$, $f(\cdot, a)^+=e^aW$, hence $\sup_{P\in\mathcal{P}}E_P[f(\cdot,
  X)^+]=\sup_nE_{P_n}[f(\cdot, X)^+]=e^a\sup_nE_{P_n}[W]=2e^a$, but
  $\sup_{P\in\mathcal{P}}E_P[f(\cdot,X)^+1_{\{f(\cdot,X)\geq
    N\}}]=e^a\sup_nE_{P_n}[W1_{\{e^aW\geq N\}}]=e^a$, for all
  $N\in\mathbb{N}$. This shows that $a\in
  \mathrm{dom}(\mathcal{I}_{f,\mathcal{P}})\setminus\mathcal{D}$. Moreover, we have
  $\emptyset= \mathcal{D}\neq \mathrm{dom}(\mathcal{I}_{f,\mathcal{P}})=L^\infty$. Indeed,
  since $x\mapsto f(\omega,x)$ is increasing, we have for all $X\in
  L^\infty$,
  \begin{align*}
    & \sup_nE_{P_n}[f(\cdot, X)^+]\leq \sup_n E_{P_n}[f(\cdot,
    \|X\|_\infty)^+]= 2e^{\|X\|_\infty},\\
    &\sup_nE_{P_n}[f(\cdot,X)^+1_{\{f(\cdot,X)^+\geq
      N\}}]\\
    &\qquad\geq\sup_nE_{P_n}[f(\cdot,
    -\|X\|_\infty)^+1_{\{f(\cdot,-\|X\|_\infty)^+\geq
      N\}}]=e^{-\|X\|_\infty}.
  \end{align*}
  Therefore, $X\in\mathrm{dom}(\mathcal{I}_{f,\mathcal{P}})$ for all $X\in L^\infty$,
  but $X\not\in\mathcal{D}$ for all $X\in L^\infty$.
  
\end{ex}

\begin{ex}\label{ex:2}
  Let $W$ be same as the previous example, and set
  $f(\omega,x):=W(\omega)^x$. Since $W\geq 1$, this $f$ is
  well-defined. For $X\equiv 1$, we have $f(\cdot,X)\in
  L^1(\mathcal{P})\setminus L^1_u(\mathcal{P})$. On the other hand, if
  $\|X\|_\infty\leq \gamma<1$,
  \begin{align*}
    \sup_{P\in\mathcal{P}}E_P[f(\cdot, X)1_{\{|f(\cdot,X)|\geq n\}}]&\leq\sup_{P\in\mathcal{P}}E_P[W^\gamma 1_{\{W^\gamma\geq n\}}]\\
    &\leq \sup_{P\in\mathcal{P}}E_P[W]^\gamma
    \cdot\sup_{P'\in\mathcal{P}}P'(W^\gamma\geq n)^{1-\gamma} \stackrel{n\rightarrow \infty}{\rightarrow}0.
  \end{align*}
  The last convergence comes from the uniform integrability of
  $\mathcal{P}$. Therefore, we have $X\in \mathcal{D}$, and consequently,
  $\emptyset\neq\mathcal{D}\subsetneq \mathrm{dom}(\mathcal{I}_{f,\mathcal{P}})$.
\end{ex}

Although the inequalities in (\ref{eq:EstConj}) can generally be
strict, we can still have an exact estimate in the case corresponding
to the second assertion of Theorem~\ref{thm:RockafellarClassical}.

\begin{cor}
  \label{cor:RepIntFuncRobConti}
  If $\mathcal{D}= L^\infty$, $\mathcal{I}_{f,\mathcal{P}}$ is norm continuous on the whole
  $L^\infty$, and
    \begin{equation}
      \label{eq:ConjReg}
      \sup_{X\in L^\infty}(\nu(X)-\mathcal{I}_{f,\mathcal{P}}(X))=
      \begin{cases}
        \mathcal{J}_{\tilde f,\mathcal{P}}(d\nu/d\mathbb{P})&\quad\text{if $\nu$ is $\sigma$-additive,}\\
        +\infty&\quad\text{otherwise.}
      \end{cases}
    \end{equation}
 
\end{cor}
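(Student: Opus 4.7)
The plan is to extract the corollary from Theorem~\ref{thm:PointWise} together with a standard Baire category argument. Under the hypothesis $\mathcal{D} = L^\infty$, the chain of inclusions $\mathcal{D} \subseteq \mathrm{dom}(\mathcal{I}_{f,\mathcal{P}}) \subseteq L^\infty$ forces $\mathrm{dom}(\mathcal{I}_{f,\mathcal{P}}) = L^\infty$, so the two suprema appearing in (\ref{eq:EstConj}) both collapse to $\sup_{X \in L^\infty}\nu_s(X)$ and the two-sided estimate becomes the identity
\begin{equation*}
(\mathcal{I}_{f,\mathcal{P}})^*(\nu) = \mathcal{J}_{\tilde f,\mathcal{P}}\!\left(\frac{d\nu_r}{d\mathbb{P}}\right) + \sup_{X \in L^\infty}\nu_s(X).
\end{equation*}

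To finish the conjugate formula (\ref{eq:ConjReg}) I would handle the singular part by a trivial scaling. If $\nu$ is $\sigma$-additive then $\nu_s = 0$, the second term vanishes, and we recover $\mathcal{J}_{\tilde f,\mathcal{P}}(d\nu/d\mathbb{P})$. Otherwise, since $\nu_s \not\equiv 0$ I can pick $A \in \mathcal{F}$ with $\nu_s(A) \neq 0$; flipping the sign if needed, assume $\nu_s(A) > 0$, and test the supremum against $n\cdot 1_A \in L^\infty$ to obtain $\sup_{X \in L^\infty}\nu_s(X) \geq \lim_{n\to\infty} n\nu_s(A) = +\infty$, forcing $(\mathcal{I}_{f,\mathcal{P}})^*(\nu) = +\infty$.

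For the continuity claim I would invoke the classical fact that a norm lower semicontinuous proper convex functional on a Banach space which is finite everywhere is automatically norm continuous. Lower semicontinuity is supplied by Proposition~\ref{prop:SupIntWellDef}, and finiteness by $\mathrm{dom}(\mathcal{I}_{f,\mathcal{P}}) = L^\infty$. Concretely, writing $L^\infty = \bigcup_n\{\mathcal{I}_{f,\mathcal{P}}\leq n\}$ as a countable union of norm-closed convex sets, Baire's theorem furnishes a sublevel set with nonempty interior, whence $\mathcal{I}_{f,\mathcal{P}}$ is bounded above on some open ball; convex interpolation then upgrades this into local Lipschitz continuity at every point.

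I do not expect any serious obstacle: the genuinely difficult analysis is already packaged in Theorem~\ref{thm:PointWise} and Proposition~\ref{prop:SupIntWellDef}, so both ingredients here — collapsing the two bounds and invoking Baire plus convexity — are essentially mechanical. The only mild subtlety is that the Baire step a priori yields continuity at a single point, but this upgrades to continuity on the whole space via the standard two-sided convexity estimate.
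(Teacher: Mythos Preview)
Your proposal is correct and essentially mirrors the paper's own proof: both collapse the two bounds in (\ref{eq:EstConj}) via $\mathcal{D}=\mathrm{dom}(\mathcal{I}_{f,\mathcal{P}})=L^\infty$, evaluate $\sup_{X\in L^\infty}\nu_s(X)$ according to whether $\nu_s$ vanishes, and deduce continuity from the general fact that a lower semicontinuous convex function finite on a Banach space is continuous. The only cosmetic difference is that you spell out the Baire--convexity argument explicitly, whereas the paper simply cites this fact from \cite{ekeland_temam76}.
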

\begin{proof}
  The \emph{equality} (\ref{eq:ConjReg}) is again a direct consequence
  of (\ref{eq:EstConj}) since
  \begin{align*}
    \sup_{X\in L^\infty}\nu_s(X)=
    \begin{cases}
        0&\text{ if }\nu_s=0,\\
        +\infty&\text{ if }\nu_s\neq 0.
      \end{cases}
    \end{align*}
  The assumption $\mathcal{D}= L^\infty$ implies also that $\mathcal{I}_{f,\mathcal{P}}$ is
  finite on the whole $L^\infty$, thus the continuity follows from the
  general fact that a lower semicontinuous convex function on a Banach
  space is continuous on the interior of its effective domain, which
  is equal to $L^\infty$ in this case. See e.g.,
  \cite{ekeland_temam76}.
\end{proof}

An important question is when the condition $\mathcal{D}=L^\infty$ holds. A
trivial case is that $f$ is ``deterministic'', i.e., $f$ does not
depend on $\omega$. Indeed, for any $X\in L^\infty$, the random
variable $f(X)$ is again bounded, since we are assuming
$\mathrm{dom}(f)=\mathbb{R}$, hence $f$ is continuous. In the general case
with $f$ being ``random'', $x\mapsto f(\omega,x)$ is still continuous,
but the bound $\sup_{|x|\leq\|X\|_\infty}f(\omega,x)$ depends on
$\omega$. The next criterion is easy, but will turn out to be useful.
\begin{cor}
  \label{cor:RandPlusCont}
  Suppose that there exists a continuous function
  $g:\mathbb{R}\rightarrow\mathbb{R}$ and a random variable $W$ such that
  $\{WdP/d\mathbb{P}\}_{P\in\mathcal{P}}$ is uniformly integrable, and
  \begin{equation}
    \label{eq:CorSuffCondContiRand}
    f(\omega,x)\leq g(x)+W(\omega).
  \end{equation}
  Then $\mathcal{D}=L^\infty$ and hence~(\ref{eq:ConjReg}) holds.
\end{cor}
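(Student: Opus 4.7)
The plan is to reduce to Corollary~\ref{cor:RepIntFuncRobConti} by verifying that every $X\in L^\infty$ belongs to $\mathcal{D}$. Fix such an $X$ and set $M:=\|X\|_\infty$. By continuity of $g$, the constant $C_M:=\sup_{|x|\le M}|g(x)|$ is finite, so the hypothesis $f(\omega,x)\le g(x)+W(\omega)$ evaluated at $x=X(\omega)$ yields the pointwise bound
\[
 f(\omega,X(\omega))^+\;\le\;\bigl(C_M+W(\omega)\bigr)^+\;\le\;C_M+W^+(\omega).
\]

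Next I would argue that $\{(C_M+W^+)\,dP/d\mathbb{P}\}_{P\in\mathcal{P}}$ is uniformly integrable. On the one hand, $\mathcal{P}$ itself is $\sigma(L^1,L^\infty)$-compact by (A1), hence uniformly integrable by the Dunford--Pettis theorem, so the family $\{C_M\,dP/d\mathbb{P}\}_{P\in\mathcal{P}}$ is uniformly integrable. On the other hand, the hypothesis that $\{W\,dP/d\mathbb{P}\}_{P\in\mathcal{P}}$ is uniformly integrable means $W\in L^1_u(\mathcal{P})$ via the equivalence (\ref{eq:Equivalence1}); since membership in $L^1_u(\mathcal{P})$ depends only on $|W|$, the same equivalence yields uniform integrability of $\{|W|\,dP/d\mathbb{P}\}_{P\in\mathcal{P}}$ and a fortiori of $\{W^+\,dP/d\mathbb{P}\}_{P\in\mathcal{P}}$. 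Sums of uniformly integrable families are uniformly integrable, so $\{(C_M+W^+)\,dP/d\mathbb{P}\}_{P\in\mathcal{P}}$ is uniformly integrable, and the pointwise domination above transfers this property to $\{f(\cdot,X)^+\,dP/d\mathbb{P}\}_{P\in\mathcal{P}}$. Hence $X\in\mathcal{D}$; as $X\in L^\infty$ was arbitrary, $\mathcal{D}=L^\infty$.

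Finally I would invoke Corollary~\ref{cor:RepIntFuncRobConti} directly to conclude norm continuity of $\mathcal{I}_{f,\mathcal{P}}$ on $L^\infty$ and the representation (\ref{eq:ConjReg}). There is no real obstacle here: the entire argument is a bookkeeping of uniform integrability under the pointwise bound, and the only small point to be careful with is the passage from $W$ to $W^+$, which is handled through the $L^1_u(\mathcal{P})$-formulation (\ref{eq:Equivalence1}) rather than directly.
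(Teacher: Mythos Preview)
Your proof is correct and follows essentially the same approach as the paper: both verify $\mathcal{D}=L^\infty$ by dominating $f(\cdot,X)^+\,dP/d\mathbb{P}$ pointwise by a uniformly integrable family built from $|g(X)|$ (or your constant $C_M$) and $|W|$, then invoke Corollary~\ref{cor:RepIntFuncRobConti}. The only cosmetic difference is that you replace the random bound $|g(X)|$ by the constant $C_M=\sup_{|x|\le\|X\|_\infty}|g(x)|$ and spell out the passage from $W$ to $W^+$ via the $L^1_u(\mathcal{P})$-characterization~(\ref{eq:Equivalence1}), which the paper leaves implicit.
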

\begin{proof}
  It suffices to note that a family $\mathcal{X}$ in $L^1$ is uniformly
  integrable if it is dominated by a uniformly integrable family $\mathcal{Y}$
  in the sense that for every $X\in \mathcal{X}$, there exists $Y\in \mathcal{Y}$ such
  that $|X|\leq |Y|$. In the present case, we take
  $\mathcal{Y}=\{(|g(X)|+|W|)dP/d\mathbb{P}\}_{P\in\mathcal{P}}$ which is uniformly
  integrable by the assumptions, and dominates
  $\{f(\cdot,X)^+dP/d\mathbb{P}\}_{P\in\mathcal{P}}$.
\end{proof}

\subsection{Proof of Theorem~\ref{thm:PointWise}}
\label{sec:ProofIntRepMain}

\begin{lem}
  \label{lem:IntSelection}
  Let $D$ be a random variable with $D^-\in L^1$, and $\alpha$ be a
  constant with $\alpha<E[D]$. Then we can take some $Z\in L^1$ such
  that $Z< D$, a.s., and $E[Z]>\alpha$.
\end{lem}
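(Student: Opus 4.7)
The plan is to build $Z$ by truncating $D$ from above and then shifting down by a small positive constant; the truncation guarantees integrability while the shift guarantees the strict inequality $Z < D$ almost surely. Since $D^- \in L^1$, the expectation $E[D] \in (-\infty, +\infty]$ is well-defined, so the hypothesis $\alpha < E[D]$ makes sense in both the integrable and non-integrable cases, and I would like to treat them uniformly.

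Concretely, for each $n \in \mathbb{N}$, set $Z_n := (D \wedge n) - 1/n$. First I would check that $Z_n \in L^1$: it is bounded above by $n - 1/n$, and its negative part satisfies $Z_n^- \leq D^- + 1/n \in L^1$. Next I would verify the strict inequality $Z_n < D$ pointwise: on $\{D \leq n\}$ we have $Z_n = D - 1/n < D$, and on $\{D > n\}$ we have $Z_n = n - 1/n < n < D$, so $Z_n < D$ everywhere.

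The remaining point is to ensure $E[Z_n] > \alpha$ for $n$ large. Since $D \wedge n \nearrow D$ pointwise and $(D \wedge n)^- \leq D^- \in L^1$, I would split $E[D \wedge n] = E[(D \wedge n)^+] - E[(D\wedge n)^-]$ and apply monotone convergence to the positive part (together with dominated convergence to the negative part) to obtain $E[D \wedge n] \nearrow E[D]$ in $(-\infty, +\infty]$. Hence $E[Z_n] = E[D \wedge n] - 1/n \to E[D] > \alpha$, so for all sufficiently large $n$ one has $E[Z_n] > \alpha$, and any such $Z := Z_n$ satisfies all three conclusions.

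The only mildly delicate point is the strict inequality $Z < D$ a.s. when $D$ is not integrable; this is exactly why the construction must incorporate both a truncation (to secure $Z \in L^1$) and a strictly positive shift (to secure strict inequality on the set $\{D = \infty\}$, if any, where $D \wedge n < D$ already holds but we still want $Z$ bounded above). No difficulty arises beyond this bookkeeping.
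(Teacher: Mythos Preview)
Your proof is correct and is essentially the same as the paper's: both build $Z$ by truncating $D$ from above and shifting down by a small positive amount, then invoke monotone convergence to push the expectation above $\alpha$. The paper writes $Z=(D-\varepsilon)\wedge\gamma$ with $\varepsilon$ fixed and $\gamma$ large, whereas you write $Z_n=(D\wedge n)-1/n$; since $(D-\varepsilon)\wedge\gamma=(D\wedge(\gamma+\varepsilon))-\varepsilon$, this is the same construction up to reparametrization (one small slip in your closing remark: the shift is what secures strict inequality on $\{D\le n\}$, not on $\{D=\infty\}$, but this does not affect the argument).
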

\begin{proof}
  Take $\varepsilon>0$ with $E[D]-\varepsilon>\alpha$. Then for any
  $\gamma\in\mathbb{R}$, $(D-\varepsilon)\wedge \gamma<D$,
  $(D-\varepsilon)\wedge \gamma\in L^1$ since $D^-\in L^1$, and
  $E[(D-\varepsilon)\wedge \gamma]\nearrow E[D]-\varepsilon>\alpha$ by
  the monotone convergence theorem. Therefore,
  $Z:=(D-\varepsilon)\wedge \gamma_0$ satisfies the condition of the
  statement for a large $\gamma_0$.
\end{proof}

\begin{lem}[{\cite[Lemma 6]{rockafellar68:_integ_whichI}}]
  \label{lem:MeasSelect} Let $g$ be a normal convex integrand, and $Z$
  be a random variable such that
  \begin{equation}
    \label{eq:MeasSelect}
    \inf_{x\in \mathbb{R}}g(\cdot,x)<Z,\quad \text{a.s.}
  \end{equation}
  There then exists some finite valued random variable $X$ such that
  \begin{equation}
    \label{eq:MeasSelect2}
    g(\cdot,X)\leq Z,\quad \text{a.s.}
  \end{equation}

\end{lem}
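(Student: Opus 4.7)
The natural approach is measurable selection. I would work with the set-valued map
\[
\Gamma(\omega) := \{x \in \mathbb{R} : g(\omega, x) \leq Z(\omega)\}.
\]
First I would verify that $\Gamma$ has nonempty, closed, convex values almost surely: nonemptiness is precisely (\ref{eq:MeasSelect}), since $\inf_x g(\omega, x) < Z(\omega)$ produces an $x$ with $g(\omega, x) < Z(\omega)$; closedness and convexity follow respectively from the lower semicontinuity and convexity of $g(\omega, \cdot)$, both built into the normality hypothesis of Definition~\ref{dfn:NormalIntegrands}.

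Next, I would observe that the graph $\mathrm{gph}(\Gamma) = \{(\omega, x) \in \Omega \times \mathbb{R} : g(\omega, x) - Z(\omega) \leq 0\}$ belongs to $\mathcal{F} \otimes \mathcal{B}(\mathbb{R})$, since $g$ is $\mathcal{F} \otimes \mathcal{B}(\mathbb{R})$-measurable (by normality) and $Z$ is $\mathcal{F}$-measurable. With $(\Omega, \mathcal{F}, \mathbb{P})$ complete (as fixed in Section~\ref{sec:SetUp}) and $\mathbb{R}$ Polish, a standard measurable selection theorem---for instance Aumann's theorem, or equivalently the projection theorem applied to the measurable graph (see e.g.\ \cite{rockafellar_wets98})---yields an $\mathcal{F}$-measurable selector $X : \Omega \to \mathbb{R}$ with $X(\omega) \in \Gamma(\omega)$ a.s. This is exactly the conclusion (\ref{eq:MeasSelect2}), and $X$ is automatically finite-valued since $\Gamma(\omega) \subset \mathbb{R}$.

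The only substantive step is the measurable selection itself, and the one subtlety I would need to track is that completeness of $\mathcal{F}$---part of the paper's standing setup---is what allows one to upgrade the typical ``universally measurable'' output of such selection theorems to genuine $\mathcal{F}$-measurability. If one prefers to avoid invoking a general selection result, an equivalent hands-on route is to define $X(\omega)$ as the point of $\Gamma(\omega)$ closest to $0$: since $\Gamma(\omega)$ is a nonempty closed convex subset of $\mathbb{R}$ this projection is single-valued, and its $\mathcal{F}$-measurability is routine from the graph measurability of $\Gamma$ via a short case analysis according as $0 \in \Gamma(\omega)$ or $\Gamma(\omega)$ lies strictly to one side of $0$.
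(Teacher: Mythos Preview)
The paper does not supply its own proof of this lemma; it simply quotes the result from \cite[Lemma~6]{rockafellar68:_integ_whichI} and uses it as a black box in the proof of Theorem~\ref{thm:PointWise}. So there is no in-paper argument to compare against.

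Your proposal is correct. The route via the sublevel multifunction $\Gamma(\omega)=\{x:g(\omega,x)\le Z(\omega)\}$ is exactly the standard way to prove this kind of statement: nonemptiness from the strict inequality~(\ref{eq:MeasSelect}), closedness and convexity from normality, joint measurability of the graph from the $\mathcal{F}\otimes\mathcal{B}(\mathbb{R})$-measurability of $g$, and then Aumann-type selection using the completeness of $(\Omega,\mathcal{F},\mathbb{P})$ assumed in Section~\ref{sec:SetUp}. One small remark on your ``hands-on'' alternative: writing $X(\omega)$ as the nearest point of the interval $\Gamma(\omega)$ to $0$ does give a concrete formula, but verifying its measurability still comes down to showing that the endpoints $\inf\Gamma(\omega)$ and $\sup\Gamma(\omega)$ are $\mathcal{F}$-measurable, which in turn relies on the projection theorem and hence again on completeness of $\mathcal{F}$---so it is not really more elementary than invoking a selection theorem directly.
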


\begin{lem}
  \label{lem:USC}
  For any $X\in \mathcal{D}$, the map $P\mapsto E_P[f(\cdot,X)]$ is weakly
  upper semicontinuous on $\mathcal{P}$.
\end{lem}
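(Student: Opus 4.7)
The plan is to verify that for any net $(P_\alpha)_\alpha\subset\mathcal{P}$ converging to $P$ in $\sigma(L^1,L^\infty)$,
\[
\limsup_\alpha E_{P_\alpha}[f(\cdot,X)]\leq E_P[f(\cdot,X)].
\]
Writing $g:=f(\cdot,X)$, I would argue via a \emph{double truncation}: for each level $N>0$, split $g=(g\wedge N)+(g-N)^+$ and treat the two pieces separately, using the uniform integrability of $\{g^+\,dP/d\mathbb{P}:P\in\mathcal{P}\}$ (available because $X\in\mathcal{D}$) to kill the tail $(g-N)^+$ \emph{uniformly} in $P\in\mathcal{P}$, and approximating the bounded-above piece $g\wedge N$ from below by elements of $L^\infty$.

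For the tail, since $(g-N)^+\leq g^+ 1_{\{g\geq N\}}$ and $\mathbb{P}(g\geq N)\to 0$ as $N\to\infty$ (as $g<\infty$ a.s.), uniform absolute continuity of the uniformly integrable family gives
\[
\varepsilon_N:=\sup_{P\in\mathcal{P}}E_P[(g-N)^+]\leq \sup_{P\in\mathcal{P}} E\bigl[g^+ 1_{\{g\geq N\}}\,dP/d\mathbb{P}\bigr]\;\xrightarrow{N\to\infty}\;0.
\]
For the other piece, writing $g\wedge N=N-(N-g)^+$ with $(N-g)^+\geq 0$, it suffices to show $P\mapsto E_P[(N-g)^+]$ is weakly lower semicontinuous. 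For each $M>0$, the bounded random variable $(N-g)^+\wedge M\in L^\infty$ pairs $\sigma(L^1,L^\infty)$-continuously with $dP/d\mathbb{P}$, so for a net $P_\alpha\to P$,
\[
\liminf_\alpha E_{P_\alpha}[(N-g)^+]\;\geq\;\lim_\alpha E_{P_\alpha}[(N-g)^+\wedge M]\;=\;E_P[(N-g)^+\wedge M],
\]
and letting $M\uparrow\infty$ via the monotone convergence theorem produces $E_P[(N-g)^+]$. Hence $P\mapsto E_P[g\wedge N]$ is weakly upper semicontinuous on $\mathcal{P}$.

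Combining the two: since $E_{P_\alpha}[(g-N)^+]\leq \varepsilon_N$, subadditivity of $\limsup$ yields
\[
\limsup_\alpha E_{P_\alpha}[g]\;\leq\;\limsup_\alpha E_{P_\alpha}[g\wedge N]+\varepsilon_N\;\leq\;E_P[g\wedge N]+\varepsilon_N.
\]
To finish, let $N\to\infty$: from the proof of Proposition~\ref{prop:SupIntWellDef}(a) we have $E_P[g^-]<\infty$, so the monotone convergence theorem (with integrable lower bound $-g^-$) gives $E_P[g\wedge N]\uparrow E_P[g]$, and $\varepsilon_N\to 0$, hence the desired inequality. The main obstacle — and the reason weak continuity fails in general — is that the negative part $g^-=f(\cdot,X)^-$ is \emph{not} known to be uniformly integrable against $\{dP/d\mathbb{P}\}_{P\in\mathcal{P}}$; the double truncation is designed precisely to trade full weak continuity for weak upper semicontinuity, needing only the UI assumption on $g^+$ and $P$-wise integrability of $g^-$.
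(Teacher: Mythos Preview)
Your argument is correct, but it follows a different route from the paper's. The paper exploits the convexity of $\mathcal{P}$ (assumption (A1)): since $P\mapsto E_P[f(\cdot,X)]$ is affine, each super-level set $A_\alpha=\{P\in\mathcal{P}:E_P[f(\cdot,X)]\geq\alpha\}$ is convex, so weak closedness reduces to strong (norm) closedness in $L^1$. This allows one to work with \emph{sequences}, pass to an a.s.\ convergent subsequence, and conclude via the reverse Fatou lemma using the uniformly integrable majorant $\{f(\cdot,X)^+\,dP_n/d\mathbb{P}\}_n$. Your double-truncation argument instead works directly with nets in $\sigma(L^1,L^\infty)$ and never invokes convexity of $\mathcal{P}$; it needs only the uniform integrability of $\{g^+\,dP/d\mathbb{P}\}_{P\in\mathcal{P}}$ and the $P$-wise integrability of $g^-$ supplied by Proposition~\ref{prop:SupIntWellDef}(a). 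The paper's proof is shorter and more in the spirit of standard convex-analytic reductions, while yours is slightly more elementary and more general in that it would apply even to a non-convex $\mathcal{P}$.
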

\begin{proof}
  Let $A_\alpha:=\{P\in\mathcal{P}:\,E_P[f(\cdot,X)]\geq \alpha\}$, which is
  convex, hence is weakly closed if and only if strongly closed. Thus,
  it suffices to show that $A_\alpha$ is strongly closed for all
  $\alpha\in\mathbb{R}$. Let $(P_n)$ be a convergent sequence in $A_\alpha$,
  i.e., $dP_n/d\mathbb{P}\rightarrow dP/d\mathbb{P}$ in $L^1$ for some probability
  measure $P\in\mathcal{P}$, and we show that $P\in A_\alpha$. Taking a
  subsequence if necessary, we may assume the a.s. convergence. For
  each $n$,
  \begin{align*}
    \frac{dP_n}{d\mathbb{P}}f(\cdot,X)\leq \frac{dP_n}{d\mathbb{P}}f(\cdot,X)^+,
  \end{align*}
  and the family $\{f(\cdot,X)^+dP_n/d\mathbb{P}\}_n$ is uniformly integrable
  and a.s. convergent. Therefore, we can apply (reverse) Fatou's lemma
  to get:
  \begin{align*}
    \alpha\leq \limsup_n E_{P_n}[f(\cdot,X)]\leq E_P[f(\cdot, X)].
  \end{align*}
  Hence $P\in A_\alpha$ and the proof is complete.
\end{proof}

\begin{proof}[Proof of Theorem~\ref{thm:PointWise}]
  We start from the second inequality. Noting that
  $\nu(X)=\nu_r(X)+\nu_s(X)=E[Xd\nu_r/d\mathbb{P}]+\nu_s(X)$, the inequality
  (\ref{eq:YoungIneqRob}) in Proposition~\ref{prop:SupIntWellDef}
  shows that
  \begin{align*}
    \sup_{X\in L^\infty}(\nu(X)-\mathcal{I}_{f,\mathcal{P}}(X))%
    &=\sup_{X\in\mathrm{dom}(\mathcal{I}_{f,\mathcal{P}})}(\nu(X)-\mathcal{I}_{f,\mathcal{P}}(X))\\
    &=\sup_{X\in\mathrm{dom}(\mathcal{I}_{f,\mathcal{P}})}\left(E[Xd\nu_r/d\mathbb{P}]-\mathcal{I}_{f,\mathcal{P}}(X)+\nu_s(X)\right)\\
    &\stackrel{\text{~(\ref{eq:YoungIneqRob})}}{\leq} \mathcal{J}_{\tilde
      f,\mathcal{P}}(d\nu_r/d\mathbb{P})+\sup_{X\in\mathrm{dom}(\mathcal{I}_{f,\mathcal{P}})}\nu_s(X).
  \end{align*}
  
  The first inequality is more subtle.  Observe that $X\mapsto
  \nu(X)-E_P[f(\cdot,X)]$ is concave on $\mathcal{D}$, and $P\mapsto
  \nu(X)-E_P[f(\cdot,X)]$ is convex and lower semicontinuous on the
  weakly compact set $\mathcal{P}$ for all $X\in \mathcal{D}$ by
  Lemma~\ref{lem:USC}. Thus a minimax theorem shows:
  \begin{align*}
    \sup_{X\in L^\infty}(\nu(X)-\mathcal{I}_{f,\mathcal{P}}(X))&=\sup_{X\in L^\infty}\inf_{P\in\mathcal{P}}(\nu(X)-E_P[f(\cdot, X)])\\
    &\geq \sup_{X\in\mathcal{D}}\inf_{P\in\mathcal{P}}(\nu(X)-E_P[f(\cdot,X)])\\
    &= \inf_{P\in\mathcal{P}}\sup_{X\in\mathcal{D}}(\nu(X)-E_P[f(\cdot,X)]).
  \end{align*}

  We shall show:

  \noindent \textbf{Claim.} For any $\alpha<\mathcal{J}_{\tilde
    f,\mathcal{P}}(d\nu_r/d\mathbb{P})$ and $\beta <\sup_{X\in \mathcal{D}}\nu_s(X)$, we have
  \begin{equation}
    \label{eq:ClaimProofMainConj1}
    \sup_{X\in\mathcal{D}}(\nu(X)-E_P[f(\cdot,X)])> \alpha +\beta,\quad\forall P\in\mathcal{P}.
  \end{equation}
  \noindent \emph{Proof of Claim.} Note first that there exists by
  definition an element $X_s\in\mathcal{D}$ with $\nu_s(X_s)>\beta$. Also,
  there exists an increasing sequence $(A_n)$ in $\mathcal{F}$ such that
  $\mathbb{P}(A_n)\nearrow 1$ and $|\nu_s|(A_n)=0$ for each $n$, by the
  singularity of $\nu_s$. In particular, for any $X\in L^\infty$,
  $\nu_s(X1_{A_n}+X_s1_{A_n^c})=\nu_s(X_s)>\beta$.

  As for the regular part, since $\alpha<\mathcal{J}_{\tilde
    f,\mathcal{P}}(d\nu_r/d\mathbb{P})=\inf_{P\in\mathcal{P}}E[\tilde f(\cdot,
  d\nu_r/d\mathbb{P})]$, Lemma~\ref{lem:IntSelection} shows the existence,
  for each $P\in\mathcal{P}$, of an integrable random variable $Z_P$ such that
  \begin{equation}
    \label{eq:ProofClaimConj2}
    \begin{split}
      E[Z_P]&>\alpha\text{ and }\\
      Z_P&<\tilde f\left(\cdot,
        \frac{d\nu_r}{d\mathbb{P}},\frac{dP}{d\mathbb{P}}\right)
      =\sup_{x\in\mathbb{R}}\left(x\frac{d\nu_r}{d\mathbb{P}}-\frac{dP}{d\mathbb{P}}f(\cdot,x)\right),\text{
        a.s.}
    \end{split}
  \end{equation}
  The latter condition and Lemma~\ref{lem:MeasSelect} applied to the
  normal integrand $(\omega,x)\mapsto
  f(\omega,x)(dP/d\mathbb{P})(\omega)-x(d\nu_r/d\mathbb{P})(\omega)$ yields a
  measurable selection $X^0_P\in L^0$ with
  \begin{equation}
    \label{eq:ProofClaimConj3}
    X_P^0d\nu_r/d\mathbb{P}-    f(\cdot,X^0_P)dP/d\mathbb{P}\geq Z_P.
  \end{equation}
  Note that $X^0_P$ is not an element of $\mathcal{D}$ (not even in
  $L^\infty$) in general. Thus we need to \emph{approximate} $X^0_P$
  with elements of $\mathcal{D}$. Recall that we are assuming $f$ to be finite
  valued. Let $B_n:=\{|X^0_P|\leq n\}\cap \{|f(\cdot,X^0_P)|\leq n\}$,
  and $C_n:=A_n\cap B_n$, for each $n$. Then $\mathbb{P}(C_n)\nearrow 1$,
  $|\nu_s|(C_n)=0$, and
  \begin{align*}
    X_P^n:=X_P^01_{C_n}+X_s1_{C_n^c}\in\mathcal{D},\text{ for each $n$.}
  \end{align*}
  Indeed, $X^0_P1_{C_n}$ and $1_{C_n}f(\cdot,X^0_P)$ are bounded by
  the construction, $f(\cdot, X_P^n)=1_{C_n}f(\cdot,
  X^0_P)+1_{C_n^c}f(\cdot,X_s)$, and hence the family
  $\{f(\cdot,X_P^n)^+dP/d\mathbb{P}\}_{P\in\mathcal{P}}$ is uniformly integrable by
  the weak compactness of $\mathcal{P}$ and the uniform integrability of
  $\{f(\cdot,X_s)^+dP/d\mathbb{P}\}_{P\in\mathcal{P}}$. We have
  \begin{align*}
    E[&X^n_Pd\nu_r/d\mathbb{P}]-E_P[f(\cdot,X_P^n)]\\
    &=E\left[1_{C_n}\left(X_P^0\frac{d\nu_r}{d\mathbb{P}}-\frac{dP}{d\mathbb{P}}f(\cdot,X^0_P)\right)\right]
    +E\left[1_{C_n^c}\left(X_s\frac{d\nu_r}{d\mathbb{P}}-\frac{dP}{d\mathbb{P}}f(\cdot,X_s)\right)\right]\\
    &\geq E[1_{C_n}Z_P]+E\left[1_{C_n^c}\left(X_s\frac{d\nu_r}{d\mathbb{P}}-\frac{dP}{d\mathbb{P}}f(\cdot,X_s)\right)\right]\\
    &=E[Z_P]+E[1_{C_n^c}\Xi_P],
  \end{align*}
  where $\Xi_P:=X_sd\nu_r/d\mathbb{P}-f(\cdot,X_s)dP/d\mathbb{P}-Z_P\in L^1$. Since
  $\nu_s(X_P^n)=\nu_s(X_s)>\beta$, 
  \begin{align*}
    \sup_{X\in\mathcal{D}}(\nu(X)-E_P[f(\cdot,X)])&\geq
    E[X_P^nd\nu_r/d\mathbb{P}]-E_P[f(\cdot,X^n_P)]+\nu_s(X_P^n)\\
    &\geq E[Z_P]+\nu_s(X_s)+E[1_{C_n^c}\Xi_P],
  \end{align*}
  for each $n$. Since $\lim_nE[1_{C_n^c}\Xi_P]=0$, we have
  \begin{align*}
    \sup_{X\in\mathcal{D}}(\nu(X)-E_P[f(\cdot,X)])\geq
    E[Z_P]+\nu_s(X_s)>\alpha+\beta,
  \end{align*}
  and the claim is proved.

  We now complete the proof of the first inequality in
  (\ref{eq:EstConj}). By taking $\alpha=\mathcal{J}_{\tilde
    f,\mathcal{P}}(d\nu_r/d\mathbb{P})-\varepsilon/2$ and
  $\beta=\sup_{X\in\mathcal{D}}\nu_s(X)-\varepsilon/2$, we have
  \begin{align*}
    \inf_{P\in\mathcal{P}}\sup_{X\in\mathcal{D}}&(\nu(X)-E_P[f(\cdot, X)])\\
    &\geq (\mathcal{J}_{\tilde f,\mathcal{P}}(d\nu_r/d\mathbb{P})-\varepsilon/2)
    +(\sup_{X\in\mathcal{D}}\nu_s(X)-\varepsilon/2)\\
    &=\mathcal{J}_{\tilde
      f,\mathcal{P}}(d\nu_r/d\mathbb{P})+\sup_{X\in\mathcal{D}}\nu_s(X)-\varepsilon,
  \end{align*}
  for all $\varepsilon>0$, and the proof is complete.
\end{proof}

\section{Proof of the Duality}
\label{sec:ProofMain}

We now complete our program outlined in Section~\ref{sec:Outline}.  We
start by translating the context of robust utility maximization into
the language of Section~\ref{sec:PointWise}. Set
\begin{equation}
  \label{eq:ProofDualityF}
  f(\omega,x):=-U(-x+B(\omega)), \quad \forall (\omega,x)\in\Omega\times\mathbb{R},
\end{equation}
which is $\mathcal{F}\otimes\mathcal{B}(\mathbb{R})$-measurable, and $x\mapsto f(\omega,x)$
is convex and continuous, hence is a finite valued normal convex
integrand in the sense of Definition~\ref{dfn:NormalIntegrands}. The
conjugate of $f$ is given by $f^*(\cdot, y)=V(y)+yB$, and
\begin{equation}
  \label{eq:ftilde}
  \tilde f(\omega,y,z)=
  \begin{cases}
    0&\text{ if }y=z=0,\\
    +\infty&\text{ if }y\neq 0, z=0,\\
    zV(y/z)+yB(\omega)&\text{ if }z>0.
  \end{cases}
\end{equation}
Noting that $u_{B,\mathcal{P}}(X)=-\mathcal{I}_{f,\mathcal{P}}(-X)$, and
$v_{B,\mathcal{P}}=(\mathcal{I}_{f,\mathcal{P}})^*$, we now arrive at the position to prove
the \emph{key lemma}.

\begin{proof}[Proof of Lemma~\ref{lem:Key}]
  We shall apply Corollary~\ref{cor:RandPlusCont} to $(f,\tilde f)$
  given by (\ref{eq:ProofDualityF}) and (\ref{eq:ftilde}).  Since
  $f^*(\cdot, 1)=V(1)+B$, Remark~\ref{rem:DirectAss}.1, and
  Remark~\ref{rem:IntegFuncAss}.2 guarantee that
  (\ref{eq:IntegRobust2}) is satisfied. On the other hand, the
  concavity of $U$ implies that
  \begin{align*}
    f(\cdot,x)\leq
    -\frac{\varepsilon}{1+\varepsilon}U\left(-\frac{1+\varepsilon}{\varepsilon}x\right)-\frac{1}{1+\varepsilon}U(-(1+\varepsilon)B^-)
    =:g(x)+W.
  \end{align*}
  Here $g$ is continuous and finite on $\mathbb{R}$, while
  $\{WdP/d\mathbb{P}\}_{P\in\mathcal{P}}$ is uniformly integrable by (A4). Hence we
  can apply Corollary~\ref{cor:RandPlusCont}, and $u_{B,\mathcal{P}}$ is
  continuous in particular.

  It remains to compute the explicit form of $\mathcal{J}_{\tilde f,\mathcal{P}}$ on
  $ba^\sigma_+$. We first show:
  \begin{equation}
    \label{eq:ExpFtilde}
    E\left[\tilde f\left(\cdot, \frac{d\nu}{d\mathbb{P}},\frac{dP}{d\mathbb{P}}\right)\right]
    =
    \begin{cases}
      V(\nu|P)+\nu(B)&\quad\text{if }V(\nu|P)<\infty\\
      +\infty&\quad\text{otherwise.}
    \end{cases}
  \end{equation}
  On the set $\{d\nu/d\mathbb{P}>0,dP/d\mathbb{P}=0\}$, we have $\tilde
  f(\cdot,d\nu/d\mathbb{P},dP/d\mathbb{P})=+\infty$, while on $\{dP/d\mathbb{P}>0\}$,
  \citep[][Lemma~3.4]{owari10:_APFM} shows
  \begin{equation}
    \label{eq:ElemEst1}
    \begin{split}
      & \frac{\varepsilon}{1+\varepsilon}\frac{dP}{d\mathbb{P}}\left(
        V\left(\frac{d\nu}{dP}\right)-V(1)\right)+\frac{dP}{d\mathbb{P}}U(-(1+\varepsilon)B^-)\\
      &\qquad \leq \tilde f \left(\frac{d\nu}{d\mathbb{P}},\frac{dP}{d\mathbb{P}}\right)%
      \leq\frac{1+\varepsilon}{\varepsilon}\frac{dP}{d\mathbb{P}}
      V\left(\frac{d\nu}{dP}\right)
      -\frac{1}{\varepsilon}\frac{dP}{d\mathbb{P}}U(-\varepsilon B^+).
    \end{split}
  \end{equation}
  Recalling that $V$ is bounded from below, the integrability
  assumption (A4) implies that $\tilde f(\cdot,
  d\nu/d\mathbb{P},dP/d\mathbb{P})^-\in L^1$ for all $\nu\in ba_+^\sigma$, and
  $\tilde f(\cdot, d\nu/d\mathbb{P},dP/d\mathbb{P})\in L^1$ if and only if
  $V(\nu|P)<\infty$ (which implies $\nu\ll P$). Further, when the
  latter condition is satisfied, we have $B\in L^1(\nu)$, hence
  $E[\tilde f(\cdot,
  d\nu/d\mathbb{P},dP/d\mathbb{P})]=E[(dP/d\mathbb{P})\{V(d\nu/dP)+(d\nu/d\mathbb{P})B\}]=V(\nu|P)+\nu(B)$. We
  thus obtain (\ref{eq:ExpFtilde}), and taking the infimum over $\mathcal{P}$,
  we have for all $\nu\in ba^\sigma_+$,
\begin{equation}
    \label{eq:RepFuncJ1}
    \mathcal{J}_{\tilde f,\mathcal{P}}(d\nu/d\mathbb{P})=
    \begin{cases}
      V(\nu|\mathcal{P})+\nu(B)&\quad\text{if }V(\nu|\mathcal{P})<\infty\\
      +\infty&\quad\text{otherwise.}
    \end{cases}
  \end{equation}
  This conclude the proof of lemma.
\end{proof}

\begin{cor}
  \label{cor:KeyLemma}
  For every $Q\in\mathcal{M}_V$, we have
  \begin{equation}
    \label{eq:KeyModified}
    v_{B,\mathcal{P}}(\lambda Q)=V(\lambda Q|\mathcal{P})+\lambda E_Q[B],\quad\forall \lambda\geq 0.
  \end{equation}

\end{cor}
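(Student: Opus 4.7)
The plan is to deduce Corollary~\ref{cor:KeyLemma} directly from the Key Lemma (Lemma~\ref{lem:Key}), with the only substantive input being that $B$ is $Q$-integrable for every $Q\in\mathcal{M}_V$. I would proceed in the following short steps.

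First, I would observe that for $\lambda\geq 0$ and $Q\in\mathcal{M}_V\subset\mathcal{M}_{\mathrm{loc}}$, the scaled measure $\lambda Q$ is a (positive, finite) $\sigma$-additive element of $ba$ absolutely continuous with respect to $\mathbb{P}$, i.e.\ $\lambda Q\in ba_+^\sigma$. Next, I would invoke Remark~\ref{rem:DirectAss}.1: since $Q\in\mathcal{M}_V$, there exists $(\lambda_Q,P_Q)$ with $V(\lambda_Q Q|P_Q)<\infty$, and applying Young's inequality to $D=(1+\varepsilon)B^-$ and $D=\varepsilon_{P_Q}B^+$ from (A4) gives $B^\pm\in L^1(Q)$, hence $B\in L^1(Q)$. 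Consequently $(\lambda Q)(B)=\lambda E_Q[B]$ is a finite real number.

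Now I would apply the Key Lemma with $\nu=\lambda Q\in ba_+^\sigma$. If $V(\lambda Q|\mathcal{P})<\infty$, the first branch of~(\ref{eq:ConjVRepKeyLemma1}) yields directly
\begin{equation*}
v_{B,\mathcal{P}}(\lambda Q)=V(\lambda Q|\mathcal{P})+(\lambda Q)(B)=V(\lambda Q|\mathcal{P})+\lambda E_Q[B].
\end{equation*}
If instead $V(\lambda Q|\mathcal{P})=+\infty$, then the second branch of~(\ref{eq:ConjVRepKeyLemma1}) gives $v_{B,\mathcal{P}}(\lambda Q)=+\infty$, while the right-hand side of the claimed identity equals $+\infty+\lambda E_Q[B]=+\infty$ thanks to the finiteness of $\lambda E_Q[B]$ established above. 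Either way the desired equality~(\ref{eq:KeyModified}) holds.

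The only point that requires care is ensuring that $\lambda E_Q[B]$ is an unambiguous finite quantity so that the expression $V(\lambda Q|\mathcal{P})+\lambda E_Q[B]$ makes sense (in $\mathbb{R}\cup\{+\infty\}$) even when $V(\lambda Q|\mathcal{P})=+\infty$; this is exactly why the argument must go through Remark~\ref{rem:DirectAss}.1 rather than being a purely formal corollary. Aside from this, the proof is essentially a transcription of the Key Lemma specialised to $\nu=\lambda Q$.
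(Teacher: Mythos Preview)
Your proof is correct and follows essentially the same route as the paper: both arguments first use Remark~\ref{rem:DirectAss}.1 to ensure $B\in L^1(Q)$ so that $\lambda E_Q[B]$ is finite, then apply Lemma~\ref{lem:Key} with $\nu=\lambda Q\in ba_+^\sigma$ and split into the cases $V(\lambda Q|\mathcal{P})<\infty$ and $V(\lambda Q|\mathcal{P})=+\infty$. The paper's version is simply more terse.
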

\begin{proof}
  Let $Q\in \mathcal{M}_V$, then $B\in L^1(Q)$ by Remark~\ref{rem:DirectAss},
  and hence the right hand side is well-defined for all $\lambda\geq
  0$. Since $v_{B,\mathcal{P}}(\lambda Q)<\infty$ if and only if $V(\lambda
  Q|\mathcal{P})<\infty$, (\ref{eq:KeyModified}) is true in this case, and
  otherwise, both sides are $+\infty$.
\end{proof}

Recall that the convex cone $\mathcal{C}$ is defined by (\ref{eq:SetC}). We
need the following lemma, which guarantees that the element $0\in
ba_+^\sigma$ never contributes to the dual problem.

\begin{lem}
  \label{lem:Variational}
  We have
  \begin{equation}
    \label{eq:Variational1}
    \inf_{\lambda>0}\inf_{Q\in\mathcal{M}_V}\left(V(\lambda Q|\mathcal{P})+\lambda E_Q[B]\right)<V(0).
  \end{equation}
\end{lem}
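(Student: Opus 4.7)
The plan is to upper bound the left-hand side by an explicit quantity $g(\lambda)$ and show that $g(\lambda)<V(0)$ for small $\lambda>0$. First observe that if $V(0)=\sup_xU(x)=+\infty$, the claim is immediate from Remark~\ref{rem:DirectAss}.3, which already gives that the right-hand side of (\ref{eq:Variational1}) is finite. Thus we focus on the nontrivial case $V(0)<\infty$.

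By (A3) and Remark~\ref{rem:NA1}, fix a triplet $(\bar Q,\bar P,\bar\lambda)\in\mathcal{M}_V^e\times\mathcal{P}\times(0,\infty)$ with $\bar Q\sim\bar P\sim\mathbb{P}$ and $V(\bar\lambda\bar Q|\bar P)<\infty$; Remark~\ref{rem:DirectAss}.1 then yields $B\in L^1(\bar Q)$. For $\lambda\in(0,\bar\lambda]$ set
\begin{equation*}
g(\lambda):=V(\lambda\bar Q|\bar P)+\lambda E_{\bar Q}[B],
\end{equation*}
so that $g(\lambda)$ is an upper bound for the infimum in (\ref{eq:Variational1}). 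Convexity of $V$ together with $V(0)<\infty$ gives $V(\lambda\bar Q|\bar P)\leq(1-\lambda/\bar\lambda)V(0)+(\lambda/\bar\lambda)V(\bar\lambda\bar Q|\bar P)<\infty$, so $g$ is finite on $(0,\bar\lambda]$.

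The core step is to prove
\begin{equation*}
\frac{g(\lambda)-V(0)}{\lambda}=E_{\bar P}\!\left[\frac{V(\lambda\,d\bar Q/d\bar P)-V(0)}{\lambda}\right]+E_{\bar Q}[B]\xrightarrow[\lambda\searrow 0]{}-\infty.
\end{equation*}
By convexity of $V$ the difference quotient $\lambda\mapsto\frac{V(\lambda y)-V(0)}{\lambda}$ is nondecreasing in $\lambda>0$ for every $y>0$, hence the integrand decreases pointwise as $\lambda\searrow 0$. Since $\bar Q\sim\bar P$, we have $d\bar Q/d\bar P>0$ $\bar P$-a.s., and the Inada condition (\ref{eq:InadaV}) then identifies the $\bar P$-a.s. limit as $V'(0^+)\cdot(d\bar Q/d\bar P)=-\infty$. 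At $\lambda=\bar\lambda$ the integrand lies in $L^1(\bar P)$ because $V$ is bounded below and $V(\bar\lambda\bar Q|\bar P)<\infty$, so the one-sided monotone convergence theorem drives the expectation to $-\infty$. Consequently $g(\lambda)<V(0)$ for all sufficiently small $\lambda>0$, yielding (\ref{eq:Variational1}).

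The main technical obstacle is the interchange of limit and integral at $\lambda\searrow 0$, where the integrand blows up to $-\infty$; this is handled cleanly by the monotonicity produced by convexity of $V$ coupled with the one-sided integrability at $\lambda=\bar\lambda$ afforded by (A3), so the remaining steps reduce to routine bookkeeping.
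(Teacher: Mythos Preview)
Your argument is correct and follows essentially the same route as the paper: fix the triplet $(\bar\lambda,\bar Q,\bar P)$ from Remark~\ref{rem:NA1} and show that the one-variable function $\lambda\mapsto V(\lambda\bar Q|\bar P)+\lambda E_{\bar Q}[B]$ has right derivative $-\infty$ at $0$. The paper obtains this derivative by quoting an external differentiability lemma, whereas you compute the limit of the difference quotient directly via monotone convergence, which is a slightly more self-contained variant of the same idea.
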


\begin{proof}
  This is trivial if $V(0)=+\infty$, thus we assume $V(0)<\infty$.
  Taking a pair $(\bar Q,\bar P)\in\mathcal{M}_V\times\mathcal{P}$ as well as
  $\bar\lambda>0$ with $\bar Q\sim \bar P\sim \mathbb{P}$ and $V(\bar\lambda
  \bar Q|\bar P)<\infty$ by Remark~\ref{rem:NA1}, the result will
  follow if we can show $V(0)>E^{\bar P}[V(\lambda d\bar Q/d\bar
  P)+\lambda(d\bar Q/d\bar P)B]$ for some $\lambda>0$.  Set
  $\varphi(\lambda):=E^{\bar P}[V(\lambda d\bar Q/d\bar
  P)+\lambda(d\bar Q/d\bar P)B]$, which is convex and finite on
  $\lambda \in [0,\bar\lambda]$ since $V(0)<\infty$. It is easy (cf.
  the proof of \cite[Lemma 3.7]{owari10:_APFM}) to show that $\varphi$
  is differentiable, and $\varphi'(\lambda)=E^{\bar Q}[V'(\lambda
  d\bar Q/d\bar P)+B]$, hence $\varphi'(0)=V'(0)+E^{\bar
    Q}[B]=-\infty$ by (\ref{eq:InadaV})2. Thus, we have
  $\varphi(\lambda)<V(0)$ for some $\lambda>0$.
\end{proof}

The next one is an abstract version of the duality.

\begin{prop}
  \label{prop:AbstDuality}
  Under the assumptions of Theorem~\ref{thm:MainFinancial}, we have
  \begin{equation}
    \label{eq:DualityAbst}
    \sup_{X\in\mathcal{C}}u_{B,\mathcal{P}}(X)=\min_{\lambda> 0,Q\in\mathcal{M}_V}(V(\lambda Q|\mathcal{P})+\lambda E_Q[B])<\infty.
  \end{equation}

\end{prop}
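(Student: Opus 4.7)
The plan is to apply Fenchel's duality theorem in the cone form (\ref{eq:FenchelCone}) with $E = L^\infty$, the concave function $g = u_{B,\mathcal{P}}$, and the convex cone $C = \mathcal{C}$. The Key Lemma (Lemma~\ref{lem:Key}) provides exactly the two ingredients Fenchel's theorem needs: $u_{B,\mathcal{P}}$ is finite and norm continuous on the whole of $L^\infty$, and its concave conjugate equals $-v_{B,\mathcal{P}}$, whose explicit form is given in (\ref{eq:ConjVRepKeyLemma1}). Since $0 \in \mathcal{C} \cap \mathrm{dom}(u_{B,\mathcal{P}})$ and $u_{B,\mathcal{P}}$ is continuous there, Fenchel's theorem yields
\[
\sup_{X \in \mathcal{C}} u_{B,\mathcal{P}}(X) \;=\; \min_{\nu \in \mathcal{C}^\circ} v_{B,\mathcal{P}}(\nu),
\]
with the minimum attained.

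Next I would translate the abstract minimizer $\nu \in \mathcal{C}^\circ$ into the financial language. By the Key Lemma, any $\nu$ achieving a finite value must satisfy $\nu \in ba_+^\sigma$ and $V(\nu|\mathcal{P}) < \infty$. Combined with the identification (\ref{eq:PosMultiple}) of $\mathcal{C}^\circ \cap ba^\sigma$ with $\mathrm{cone}\,\mathcal{M}_{\mathrm{loc}}$, every candidate minimizer is of the form $\nu = \lambda Q$ with $\lambda \geq 0$ and $Q \in \mathcal{M}_{\mathrm{loc}}$. When $\lambda > 0$, the condition $V(\lambda Q|\mathcal{P}) < \infty$ forces $Q \in \mathcal{M}_V$, and Corollary~\ref{cor:KeyLemma} then gives $v_{B,\mathcal{P}}(\lambda Q) = V(\lambda Q|\mathcal{P}) + \lambda E_Q[B]$. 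So the minimum over $\mathcal{C}^\circ$ reduces to
\[
\min\Bigl\{V(0),\;\inf_{\lambda > 0,\,Q \in \mathcal{M}_V}\bigl(V(\lambda Q|\mathcal{P}) + \lambda E_Q[B]\bigr)\Bigr\},
\]
where the first term accounts for the degenerate case $\nu = 0$ (with $v_{B,\mathcal{P}}(0) = V(0|\mathcal{P}) = V(0)$).

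It then remains to rule out the degenerate case. This is precisely what Lemma~\ref{lem:Variational} delivers: the infimum over $\lambda > 0$ and $Q \in \mathcal{M}_V$ is strictly less than $V(0)$, so the attained minimum must occur at some $(\hat\lambda,\widehat Q) \in (0,\infty) \times \mathcal{M}_V$. This gives (\ref{eq:DualityAbst}), with finiteness of the right hand side already recorded in Item~2 of Remark~\ref{rem:DirectAss}.

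The substantive obstacle has been absorbed entirely into the Key Lemma (which rests on the robust Rockafellar theorem of Section~\ref{sec:PointWise}) and into Lemma~\ref{lem:Variational}; once these are in hand, the proof is a clean application of Fenchel--Rockafellar duality together with the classical identification of $\mathcal{C}^\circ \cap ba^\sigma$ as the cone generated by local martingale measures. No additional minimax or separation argument is required, which is the methodological point emphasized in the introduction.
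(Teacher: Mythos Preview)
Your proof is correct and follows essentially the same route as the paper's own proof: apply Fenchel's duality theorem using the continuity of $u_{B,\mathcal{P}}$ from Lemma~\ref{lem:Key}, restrict the dual minimization to $\mathcal{C}^\circ\cap\mathrm{dom}(v_{B,\mathcal{P}})$, identify this set with $\{\lambda Q:\lambda\geq 0,\,Q\in\mathcal{M}_V\}$ via (\ref{eq:PosMultiple}) and (\ref{eq:ConjVRepKeyLemma1}), rewrite the objective using Corollary~\ref{cor:KeyLemma}, and exclude $\lambda=0$ by Lemma~\ref{lem:Variational}. The only cosmetic difference is that you spell out $v_{B,\mathcal{P}}(0)=V(0)$ explicitly, whereas the paper simply notes that $\lambda=0$ never contributes; the underlying reasoning is the same.
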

\begin{proof}
  Since $u_{B,\mathcal{P}}$ is continuous on the whole $L^\infty$, Fenchel's
  duality theorem (\cite[Theorem 1]{rockafellar66:_fenchel}) shows that
  \begin{align}\label{eq:ProofDualAbstract}
    \sup_{X\in\mathcal{C}}u_{B,\mathcal{P}}(X)=\min_{\nu\in
      \mathcal{C}^\circ}v_{B,\mathcal{P}}(\nu)=\min_{\nu\in\mathcal{C}^\circ\cap\mathrm{dom}(v_{B,\mathcal{P}})}v_{B,\mathcal{P}}(\nu).
  \end{align}
  Here ``$\min$'' means of course that it is attained by some
  $\hat\nu\in\mathcal{C}^\circ\cap \mathrm{dom}(v_{B,\mathcal{P}})$. By
  Lemma~\ref{lem:Key}, $\nu\in \mathrm{dom}(v_{B,\mathcal{P}})$ if and only if
  $\nu$ is $\sigma$-additive and $V(\nu|\mathcal{P})<\infty$, while every
  $\sigma$-additive element $\nu\in\mathcal{C}^\circ$ is a positive multiple
  of some local martingale measure, i.e., $\nu=\lambda Q$ with
  $\lambda\geq 0$ and $Q\in\mathcal{M}_{loc}$. Thus we can rewrite the right
  hand side of (\ref{eq:ProofDualAbstract}) using
  (\ref{eq:KeyModified}):
  \begin{align*}
    \min_{\nu\in\mathcal{C}^\circ\cap\mathrm{dom}(v_{B,\mathcal{P}})}v_{B,\mathcal{P}}(\nu) &=\min_{\lambda\geq 0,Q\in \mathcal{M}_V}v_{B,\mathcal{P}}(\lambda Q)\\
    &=\min_{\lambda \geq 0,Q\in\mathcal{M}_V}(V(\lambda Q|\mathcal{P})+\lambda E_Q[B]).
  \end{align*}
  But Lemma~\ref{lem:Variational} implies that $\lambda=0$ never
  contributes to the minimum, hence we obtain~(\ref{eq:DualityAbst}),
  and the finiteness of the right hand side follows from (A3).
\end{proof}

The next step is to replace the ``$\inf_{X\in\mathcal{C}}$'' by the infimum
over stochastic integrals $\theta\cdot S_T$. By the definition of
$\mathcal{C}$, the inclusion $\Theta_{bb}\subset \Theta_V$ (see
(\ref{eq:admissible}) and (\ref{eq:ThetaV}) for definitions) as well
as the monotonicity of utility function, we have
\begin{equation}
  \label{eq:Monotone}
  \begin{split}
    \sup_{X\in\mathcal{C}}u_{B,\mathcal{P}}(X)&\leq
    \sup_{\theta\in\Theta_{bb}}\inf_{P\in\mathcal{P}}E_P[U(\theta\cdot S_T+B)]\\
    & \leq \sup_{\theta\in\Theta_V}\inf_{P\in\mathcal{P}}E_P[U(\theta\cdot
    S_T+B)].
  \end{split}
\end{equation}
Therefore, it suffices to show:
\begin{lem}
  \label{lem:DualIneq1}
  We have
  \begin{equation}
    \label{eq:ThetaLeqDual}
    \sup_{\theta\in\Theta_V}\inf_{P\in\mathcal{P}}E_P[U(\theta\cdot S_T+B)]
    \leq \inf_{\lambda>0}\inf_{Q\in\mathcal{M}_V}(V(\lambda Q|\mathcal{P})+\lambda E_Q[B]).%<\infty.
  \end{equation}
\end{lem}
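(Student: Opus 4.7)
The plan is to apply the standard ``Young's inequality plus supermartingale'' argument, which is the routine half of convex duality. Fix $\theta\in\Theta_V$, $\lambda>0$, $Q\in\mathcal{M}_V$, and any $P\in\mathcal{P}$ with $V(\lambda Q|P)<\infty$ (so $Q\ll P$). Young's inequality (\ref{eq:Young1}), applied $P$-a.s. with $x=\theta\cdot S_T+B$ and $y=\lambda\,dQ/dP$, gives
\[
U(\theta\cdot S_T+B)\leq V(\lambda\,dQ/dP)+\lambda\,(dQ/dP)(\theta\cdot S_T+B).
\]
By the very definition of $\Theta_V$ in (\ref{eq:ThetaV}), $\theta\cdot S$ is a $Q$-supermartingale with $\theta_0=0$, so $\theta\cdot S_T\in L^1(Q)$ and $E_Q[\theta\cdot S_T]\leq 0$; and $B\in L^1(Q)$ by Remark~\ref{rem:DirectAss}(1). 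Hence the right-hand side is $P$-integrable, so in particular $(U(\theta\cdot S_T+B))^+\in L^1(P)$ and taking $E_P$ on both sides produces
\[
E_P[U(\theta\cdot S_T+B)]\leq V(\lambda Q|P)+\lambda E_Q[\theta\cdot S_T]+\lambda E_Q[B]\leq V(\lambda Q|P)+\lambda E_Q[B],
\]
with the left-hand side now well-defined in $[-\infty,+\infty)$.

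The next step is to take $\inf_{P\in\mathcal{P}}$ on both sides. On the left this simply yields $\inf_{P\in\mathcal{P}}E_P[U(\theta\cdot S_T+B)]$. On the right, restricting the infimum to those $P\in\mathcal{P}$ with $V(\lambda Q|P)<\infty$ (any other $P$ contributes only $+\infty$) produces $V(\lambda Q|\mathcal{P})+\lambda E_Q[B]$. If no $P\in\mathcal{P}$ satisfies $V(\lambda Q|P)<\infty$, then $V(\lambda Q|\mathcal{P})=+\infty$ and the desired bound is vacuous. Finally, I would pass to the supremum over $\theta\in\Theta_V$ on the left and to the infimum over $(\lambda,Q)\in(0,\infty)\times\mathcal{M}_V$ on the right, which delivers (\ref{eq:ThetaLeqDual}).

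There is no serious obstacle here; the only point requiring a little care is integrability, namely that $E_P[U(\theta\cdot S_T+B)]$ can be assigned a value in $[-\infty,+\infty)$ so that the inequality after integration is meaningful. This is precisely what the Young bound provides: $V(\lambda\,dQ/dP)\in L^1(P)$ whenever $V(\lambda Q|P)<\infty$, $(dQ/dP)B\in L^1(P)$ by Remark~\ref{rem:DirectAss}(1), and the $Q$-supermartingale property built into $\Theta_V$ handles the $(dQ/dP)(\theta\cdot S_T)$ term. All the substantial analytic work has already been carried out to establish Proposition~\ref{prop:AbstDuality} via Theorem~\ref{thm:PointWise}; Lemma~\ref{lem:DualIneq1} merely supplies the matching ``$\leq$'' direction needed to identify the two sides of the sandwich in (\ref{eq:Monotone}).
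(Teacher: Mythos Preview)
Your proof is correct and follows essentially the same approach as the paper: fix $\theta$, $\lambda$, $Q$, and $P$ with $V(\lambda Q|P)<\infty$, apply Young's inequality pointwise, use the $Q$-supermartingale property from the definition of $\Theta_V$ together with $B\in L^1(Q)$, and then take the appropriate infima and supremum. Your treatment of the integrability issue (ensuring $E_P[U(\theta\cdot S_T+B)]$ is well-defined in $[-\infty,+\infty)$) is in fact slightly more explicit than the paper's.
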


\begin{proof}
  Let $\theta\in\Theta_V$ be fixed and take any pair
  $(Q,P)\in\mathcal{M}_V\times\mathcal{P}$ as well as $\lambda>0$ with $V(\lambda
  Q|P)<\infty$ ($\Rightarrow$ $Q\ll P$). Then by Young's inequality,
  \begin{align*}
    U(\theta\cdot S_T+B)&\leq V\left(\lambda \frac{dQ}{dP}\right)+\lambda\frac{dQ}{dP}(\theta\cdot S_T+B)
  \end{align*}
  Since $\theta\cdot S$ is a $Q$-supermartingale (hence $\theta\cdot
  S_T$ is $Q$-integrable in particular), and $B\in L^1(Q)$, we obtain
  by taking $P$-expectation
  \begin{align*}
    E_P[U(\theta\cdot S_T+B)]%
    &\leq E_P\left[
      V\left(\lambda\frac{dQ}{dP}\right)+\lambda\frac{dQ}{dP}B\right]%
    +E_Q[\theta\cdot S_T]\\
    &\leq V(\lambda Q|P)+\lambda E_Q[B].
  \end{align*}
  When $(Q,P)\in\mathcal{M}_V\times\mathcal{P}$ but $V(\lambda Q|P)=\infty$, the right
  hand side is $+\infty$, hence this inequality is valid for all
  $\theta\in\Theta_V$, $\lambda >0$ and
  $(Q,P)\in\mathcal{M}_V\times\mathcal{P}$. Taking the infimum over $P\in\mathcal{P}$, we have
  $\inf_{P\in\mathcal{P}}E_P[U(\theta\cdot S_T+B)]\leq V(\lambda
  Q|\mathcal{P})+\lambda E_Q[B]$. Now taking the supremum over
  $\theta\in\Theta_V$ and infimum over
  $(\lambda,Q)\in(0,\infty)\times\mathcal{M}_V$, we obtain
  (\ref{eq:ThetaLeqDual}).
\end{proof}

\begin{proof}[Proof of Theorem~\ref{thm:MainFinancial}]
  
  The equality (\ref{eq:ThmDuality}) as well as the existence of a
  dual minimizer follow from (\ref{eq:DualityAbst}) and the inequality
  (\ref{eq:ThetaLeqDual}). Let $(\hat\lambda,\widehat{Q})\in
  (0,\infty)\times\mathcal{M}_V$ be a dual minimizer. Then since $\mathcal{P}$ is
  weakly compact, and $(\nu,P)\mapsto V(\nu|P)$ (hence $P\mapsto
  V(\hat\lambda \widehat{Q}|P)$) is weakly lower semicontinuous (\citep[Lemma
  2.7]{follmer_gundel06}), there exists a $\widehat{P}$ with $V(\hat\lambda
  \widehat{Q}|\mathcal{P})=V(\hat\lambda \widehat{Q}|\widehat{P})$. This triplet is a desired solution
  to the dual problem, i.e.,
  \begin{align*}
    V(\hat\lambda \widehat{Q}|\widehat{P})+\hat\lambda E_{\widehat{Q}}[B]=
    \inf_{\lambda>0}\inf_{(Q,P)\in\mathcal{M}_V\times\mathcal{P}}(V(\lambda
    Q|P)+\lambda E_Q[B]),
  \end{align*}
  and we finish the proof.
\end{proof}

\begin{proof}[Proof of Corollary~\ref{cor:Equiv}]
  Again we take a triplet $(\bar\lambda, \bar Q,\bar
  P)\in(0,\infty)\times\mathcal{M}_V\times\mathcal{P}$ satisfying
  $V(\bar\lambda \bar Q|\bar P)<\infty$ and $\bar Q\sim\bar
  P\sim\mathbb{P}$ by Remark~\ref{rem:NA1}, then set
  $\nu_\alpha:=\alpha\bar\lambda \bar Q+(1-\alpha)\hat\lambda \widehat{Q}$ and
  $P_\alpha :=\alpha \bar P+(1-\alpha)\widehat{P}$. Note that $\nu_\alpha\sim
  P_\alpha\sim\mathbb{P}$ for all $\alpha\in (0,1]$, and the function
  $\alpha \mapsto V(\nu_\alpha|P_\alpha)+\nu_\alpha(B)$ is convex and
  finite valued on $[0,1]$. Therefore, this function is upper
  semicontinuous on $[0,1]$, which implies $ \inf_{\alpha\in
    (0,1]}V(\nu_\alpha|P_\alpha)+\nu_\alpha(B)\leq
  \limsup_{\alpha\searrow 0} V(\nu_\alpha|P_\alpha)+\nu_\alpha(B)\leq
  V(\hat\lambda\widehat{Q}|\widehat{P})+\hat\lambda E_{\widehat{Q}}[B]$. This implies that
  \begin{align*}
    \inf_{\lambda>0}\inf_{Q\in\mathcal{M}_V^e} (V(\lambda Q|\mathcal{P})+\lambda
    E_Q[B]) &\leq \inf_{\alpha\in (0,1]}(V(\nu_\alpha|P_\alpha)+\nu_\alpha(B))\\
    &\leq V(\hat\lambda \widehat{Q}|\widehat{P})+\hat\lambda E_{\widehat{Q}}[B]\\
    &=\inf_{\lambda>0}\inf_{Q\in\mathcal{M}_V}(V(\lambda Q|\mathcal{P})+\lambda
    E_Q[B]).
  \end{align*}
  This concludes the proof.
\end{proof}

%\small
% \bibliographystyle{SVnatOwari4}
% \bibliography{main,finance}

\end{document}